\documentclass[12pt]{article}
\usepackage{amsmath,amssymb,amsfonts,amsthm}

\setlength{\parindent}{18pt}
\textwidth15.8 cm
\textheight22.9cm

\topmargin- 0.5cm
\hoffset -1.3 cm

\newcounter{item}[section]
\newcounter{kirshr}
\newcounter{kirsha}
\newcounter{kirshb}
\newenvironment{enumarab}{\setcounter{kirshb}{1}
\begin{list}{(\arabic{kirshb})}{\usecounter{kirshb}} }{\end{list}}
\newenvironment{mysect}[1]{\vskip8pt\par\noindent\setcounter{item}{1}
\setcounter{equation}{0}{\large\bf\arabic{section}.  #1 }\vskip8pt\nopagebreak\par\nopagebreak }
{\stepcounter{section}\upshape\par}

\newtheorem{theorem}{Theorem}[section]

\newtheorem{corollary}[theorem]{Corollary}
\newtheorem{proposition}[theorem]{Proposition}

\theoremstyle{definition}

\newcommand\overcirc[1]{\raisebox{10pt}{\tiny$\circ$}{\kern-6.5pt}\mbox{$#1$}}
\newcommand\undersym[2]{\raisebox{-6pt}{\tiny$#2$}{\kern-5pt}\mbox{$#1$}}
\begin{document}
\title{\bf Telleparallel Lagrange Geometry and a Unified Field Theory: Linearization of the Field Equations}

\author{M. I. Wanas$^\dagger$, Nabil L. Youssef$^{\,\ddagger}$ and  A. M.  Sid-Ahmed$^{\natural}$\footnote{The authors are members of the Egyptian Relativity Group (ERG): www.erg.net.eg}}
\date{}
\maketitle

\vspace{-1.1cm}
\begin{center}
{$\dagger$ Department of Astronomy, Faculty of Science, Cairo
University\\ CTP of the British University in Egypt (BUE)}
\end{center}
\vspace{-0.9cm}
\begin{center}
 miwanas@sci.cu.edu.eg, \,mwanas@cu.edu.eg
 \end{center}
\vspace{-0.6cm}
\begin{center}
{$\ddagger$ Department of Mathematics, Faculty of Science, Cairo
University}
\end{center}
\vspace{-0.9cm}
\begin{center}
{nlyoussef@sci.cu.edu.eg, \,nlyoussef2003@yahoo.fr}
\end{center}
\vspace{-0.6cm}
\begin{center}
{$\natural$ Department of Mathematics, Faculty of Science, Cairo
University}\end{center}
\vspace{-0.9cm}
\begin{center}
{amrs@mailer.eun.eg, \,amrsidahmed@gmail.com}
\end{center}

\maketitle \vspace{-1cm}

\vspace{1.5cm} \maketitle
\smallskip
\stepcounter{section}

 \noindent{\bf Abstract.}
The present paper is a natural continuation of our previous paper: "Teleparallel Lagrange geometry and a unified field theory, Class. Quantum Grav.,
27 (2010), 045005 (29pp)" \cite{WNA}. In this paper, we apply a linearization scheme on the field equations obtained in \cite{WNA}. Three important results under
the linearization assumption are accomplished. First, the vertical fundamental geometric objects of the
EAP-space loose their dependence on the positional argument
$x$. Secondly, our linearized theory in the Cartan-type case coincides with the GFT in the first order of approximation.
Finally, an approximate solution of the vertical field equations is obtained.

\bigskip
\medskip\noindent{\bf Keywords:} Extended Absolute Parallelism geometry, Euler-Lagrange equations,
Generalized Field Theory, Extended Teleparallel Unified field theory, Linearization, Cartan-type case.

\bigskip
\medskip\noindent{\bf PACS}:\/ 04.50-h, 12.10-g, 45.10.Na, 02.40.Hw, 02.40.Ma.

\medskip\noindent{\bf MSC}:\/ 53B40, 53B50, 53Z05, 83C22.

\newpage


\begin{mysect}{Motivation and introduction}

The theory under consideration in the present work is an {\it expansion} of the generalized field theory (GFT) \cite{aaa} (formulated in the
context of Absolute Parallelism (AP-) geometry (cf. \cite{FI}, \cite{b}, \cite{AMR}) to the {\it tangent
bundle  $TM$}. The theory is formulated in the context of Extended Absolute Parallelism (EAP-) geometry \cite{EAP}.
The EAP geometry, combines within its structure, the
geometric richness of the tangent bundle (cf. \cite{GLS}) and the mathematical simplicity of AP-geometry
(cf. \cite{HP}, \cite{b}). The theory, which we refer to as the
Extended Teleparallel Unified Field Theory
(ETUFT), is constructed in a much wider and richer context than the GFT. Accordingly, the suggested
theory has at least the advantages and features of its mother theory (and hopefully more). The GFT has the main
properties required for any physical theory unifying gravity and electromagnetism:

\begin{enumarab}

\item Its material contents is totally induced by geometry \cite{aaa}.

\item All physical (and geometric) quantities of the theory are derived from one entity, namely, the building blocks of the geometry (the fundamental vector
fields forming the parallelization) \cite{aaa}.

\item The theory shows that the charge of gravity (mass/energy) can generate electromagnetism \cite{2010}, \cite{aaaaa}.

\item The theory shows that there is a direct relation between mass and electric charge. In particular, it shows that a part of mass is
electromagnetic in origin and that the mass of the electron is totally electromagnetic in origin \cite{W}.

In addition to the above advantages enjoyed by the GFT, the ETUFT carries within its structure the potentiality of
describing interactions other than gravity and electromagnetism \cite{WNA}. In fact, we
conjecture that the vertical field equations may express some kind of micro-(or quantum) properties.
\end{enumarab}

For the above mentioned reasons and more, we are motivated to study the linearized form of the suggested
field equations in the context of the ETUFT.

\bigskip

In the context of geometric field theories, the linearization scheme, although not covariant, has many advantages.
Among these advantages is the
determination of the constants or/and the parameters characterizing a certain theory.
Another advantage is to test whether a nonlinear theory covers
the domain of a previous, partially successful, linear theory. A third (and important)
one is the attribution of some physical meaning to the geometric objects
used to construct the theory.

\bigskip

The linearization scheme depends mainly on expanding different geometric objects,
used in the construction of the theory, in terms of some small
parameters and then neglecting terms of the second and higher order in these parameters.
The neglect of such small quantities in the
field equations implies  the physical assumption that the field is {weak}. Also, the neglect of similar quantities in the
{equations of motion} reflects the physical assumption that the motion is {slow}. The two assumptions characterize
low energy
systems. This provides us with a tool to test
the theory in low energy regeme. It should be noted that the production of high energies to test some
theories is difficult and
sometimes even impossible because of both technological and budgetary reasons.

\bigskip

In the present work, we are going to linearize the field equation of the ETUFT \cite{WNA}.
In addition to the advantages of the linearization scheme mentioned above, we hope to throw more light
on both the horizontal and vertical geometric objects used in the construction of the ETUFT.
In other words, we hope to illuminate the role of the
extra degrees of freedom implied by the ETUFT. A further advantage, which may be gained from linearization,
is whether {\it one can explore the physical role of the
nonlinear connection characterizing the underlying geometry}. Though
the mathematical role of the nonlinear connection in the derived field equations is
clear\footnote{In fact, the splitting of the field equations into horizontal and vertical counterparts is made
possible due to the existence
of the nonlinear connection.}, the physical aspect of this nonlinear connection needs further
investigation.\footnote{This may be partially achieved if an appropriate
physical interpretation of the directional argument $y$ is given.}

\bigskip

The paper is organized in the following manner.
In section 2,  a short survey of the field equations obtained in the EAP-context is given.
In section 3, we give a brief account on the process of linearization in the the classical AP-context followed by the
process of linearization in the EAP-context. In section 4, we compute the fundamental tensor fields
of the EAP-space under our linearization assumption. In section 5, we study the Cartan-type case under
the linearized condition. In section 6, we give an approximate solution of the vertical field equations,
and finally, we end the
paper by some concluding remarks.

\bigskip

It should be noted that this paper is a (natural) continuation of \cite{WNA}. Accordingly, we will use
the results of \cite{WNA} and stick to its notations.

\end{mysect}


\newpage

\begin{mysect}{A short survey of the field equations in the EAP-context}
We first recall the fundamental tensor fields of the EAP-space. These are given in the following table \cite{WNA}.

\begin{center}{\bf Table 1: Fundamental second rank tensors of EAP-space}\\[0.1 cm]
\footnotesize{\begin{tabular}{|c|c|c|c|}\hline
\multicolumn{2}{|c|}{\hbox{ }} &\multicolumn{2}{c|}{\hbox{ }}\\
\multicolumn{2}{|c|}{{\bf Horizontal}}&\multicolumn{2}{c|}{{\bf Vertical}}
\\[0.4 cm]\hline
&&&\\
{\bf Skew-Symmetric}&{\bf Symmetric}&{\bf Skew-Symmetric}&{\bf Symmetric}
\\[0.4 cm]\hline
&&&\\
$\xi_{\mu\nu}: = \gamma_{\mu\nu}  \!^{\alpha} \!\,_{|\alpha}$&$ \ $&
${\xi}_{ab}: = \gamma_{ab}  \!^{d} \!\,_{||d}$&$ \ $
\\[0.4 cm]\hline
&&&\\
$\gamma_{\mu\nu}: = C_{\alpha}\gamma_{\mu\nu} \!^{\alpha}$& $ \
$&${\gamma}_{ab} : = C_{d}\gamma_{ab} \!^{d}$& $ \ $
\\[0.4 cm]\hline
&&&\\
$\eta_{\mu\nu} := C_{\beta}\,\Lambda^{\beta}_{\mu\nu}$&
$\phi_{\mu\nu} := C_{\beta}\,\Omega^{\beta}_{\mu\nu}$ &$
{\eta}_{ab} := C_{d}\,T^{d}_{ab}$& $
{\phi}_{ab} := C_{d}\,\Omega^{d}_{ab}$
\\[0.4 cm]\hline
&&&\\
$\chi_{\mu\nu} := \Lambda^{\alpha}_{\mu\nu|\alpha}$& $\psi_{\mu\nu}
:= \Omega^{\beta}_{\mu\nu|\beta}$& $ {\chi}_{ab} :=
T^{d}_{ab||d}$& $ {\psi}_{ab} :=
\Omega^{d}_{ab||d}$
\\[0.4 cm]\hline
&&&\\
$\epsilon_{\mu\nu} := C_{\mu|\nu} -
C_{\nu|\mu}$& $\theta_{\mu\nu} :=  C_{\mu|\nu}
+ C_{\nu|\mu}$& ${\epsilon}_{ab} :=
C_{a||b} - C_{b||a}$& ${\theta}_{ab} :=
C_{a||b} + C_{b||a}$
\\[0.4 cm]\hline
&&&\\
\tiny{$U_{\mu\nu} :=
\gamma^{\beta}_{\alpha\mu}\gamma^{\alpha}_{\nu\beta}
 - \gamma^{\beta}_{\mu\alpha}\gamma^{\alpha}_{\beta\nu}$}&
\tiny{$h_{\mu\nu}: =
\gamma^{\beta}_{\alpha\mu}\gamma^{\alpha}_{\nu\beta} +
\gamma^{\beta}_{\mu\alpha}\gamma^{\alpha}_{\beta\nu}$}&

\tiny{${U}_{ab}: =
\gamma^{c}_{da}\gamma^{d}_{bc} -
\gamma^{c}_{ad}\gamma^{d}_{cb}$}&
\tiny{${h}_{ab} :=
\gamma^{c}_{da}\gamma^{d}_{bc} +
\gamma^{c}_{ad}\gamma^{d}_{cb}$}
\\[0.4 cm]\hline
&&&\\
$ \ $&$\sigma_{\mu\nu} :=
\gamma^{\beta}_{\alpha\mu}\gamma^{\alpha}_{\beta\nu}$& $ \ $&
${\sigma}_{ab}: =
\gamma^{c}_{da}\gamma^{d}_{cb}$

\\[0.4 cm]\hline
&&&\\
$ \ $&$\omega_{\mu\nu} :=
\gamma^{\beta}_{\mu\alpha}\gamma^{\alpha}_{\nu\beta}$& $  \ $&
${\omega}_{ab} :=
\gamma^{c}_{ad}\gamma^{d}_{bc}$
\\[0.4 cm]\hline
&&&\\
$ \ $&$\alpha_{\mu\nu} := C_{\mu}C_{\nu}$& $ \ $& $
{\alpha}_{ab} := C_{a}C_{b}$
\\[0.4 cm]\hline
\end{tabular}}
\end{center}

\bigskip

We now give a short survey of the field equations obtained in \cite{WNA}. We take for the horizontal field equations a Lagrangian similar in form (but not in content) to that used by Mikhail and
Wanas in their construction of the GFT \cite{aaa}. We also assume that the nonlinear connection is {\it independent of the horizontal counterparts of the
fundamential vector fields forming the parallelization.}\footnote{This condition is actually satisfied under the Cartan-type condition \cite{WNA}.}

\bigskip

In view of the above, for {\bf the horizontal field equations}, we start with the following scalar Lagrangian: Let
$${\cal H} = |\lambda| g^{\mu\nu} H_{\mu\nu},$$ where
\begin{equation}H_{\mu\nu} := \Lambda^{\alpha}_{\epsilon\mu}\Lambda^{\epsilon}_{\alpha\nu} - C_{\mu}C_{\nu}.\end{equation}

The Euler-Lagrange equations \cite{GLS} for this Lagrangian are given by
\begin{equation}\label{ELEs}\frac{\delta {\cal H}}{\delta \lambda_{\beta}}: = \frac{\partial {\cal H}}{\partial \lambda_{\beta}} -
\frac{\partial}{\partial x^{\gamma}}\bigg(\frac{\partial {\cal H}}
{\partial \lambda_{\beta, \gamma}}\bigg) - \frac{\partial}{\partial y^{a}}\bigg(\frac{\partial {\cal H}}
{\partial \lambda_{\beta; a}}\bigg) = 0.\end{equation}

 Setting
\begin{equation}\label{xcx}E^{\beta}_{\sigma}: = \frac{1}{|\lambda|}\bigg(\frac{\delta {\cal H}}{\delta \,\undersym{\lambda}{j}_{\beta}}\bigg)\,
\undersym{\lambda}{j}_{\sigma},\end{equation}
the Euler-Lagrange equations (\ref{ELEs}) take the form
\begin{equation}\begin{split}\label{EEEE}\\[- 0.5 cm]0&=E^{\beta}_{\sigma} = \delta^{\beta}_{\sigma}H - 2H^{\beta}_{\sigma}
- 2C_{\sigma}C^{\beta} - 2\delta^{\beta}_{\sigma}C^{\epsilon}\!\,_{|\epsilon} +
2\delta^{\beta}_{\sigma}C^{\epsilon}C_{\epsilon} -2C^{\epsilon}\Lambda^{\beta}_{\epsilon\sigma}\\& \ \ \ \ \ \ \ \ \ \ \
\ + \, 2g^{\alpha\beta}C_{\sigma|\alpha} - 2g^{\gamma\alpha}\Lambda^{\beta}_{\sigma\alpha|\gamma}
 - 2N^{a}_{\alpha; a}(\Lambda^{\alpha}\,_{\sigma}\,^{\beta}-  \Lambda^{\beta}\,_{\sigma}\,^{\alpha})\\&
\ \ \ \ \ \ \ \ \ \ \ \ + \, 2C^{\nu}(\delta^{\beta}_{\sigma}N^{a}_{\nu; a} - \delta^{\beta}_{\nu}N^{a}_{\sigma; a}) + 2g^{\alpha\beta}\{\mathfrak{S}_{\sigma, \alpha, \epsilon}C^{\epsilon}_{\alpha a}
R^{a}_{\sigma\epsilon}\}\end{split}\end{equation}
Lowering the index $\beta$ in (\ref{EEEE}) and renaming the indices, we get
\begin{equation}\begin{split}\label{EEEXz}\\[- 0.7 cm]0& = E_{\mu\nu} := g_{\mu\nu}H - 2H_{\mu\nu} -
2C_{\mu}C_{\nu} - 2g_{\mu\nu}(C^{\epsilon}\!\,_{|\epsilon} - C^{\epsilon}C_{\epsilon})- 2C^{\epsilon}\Lambda_{\mu\epsilon\nu}
+ 2C_{\nu|\mu}\\& \ \ \ \ \ \ \ \ \ \ \ \ \ \ - \ 2g^{\epsilon\alpha}\Lambda_{\mu\nu\alpha|\epsilon}
 - 2N^{a}_{\epsilon; a}(\Lambda^{\epsilon}\,_{\nu\mu} -  \Lambda_{\mu\nu}\,^{\epsilon})
+ 2g_{\mu\nu}C^{\epsilon}N^{a}_{\epsilon; a} - 2C_{\mu}N^{a}_{\nu; a}\\& \ \ \ \ \ \ \ \ \ \ \ \ \ \ + 2 \ \mathfrak{S}_{\mu, \nu, \epsilon}C^{\epsilon}_{\mu a}
R^{a}_{\nu\epsilon}.\end{split}\end{equation}
{This is the {generalized horizontal field equations} in the context of the EAP-geometry}.

\bigskip

Considering the {symmetric} part of (\ref{EEEXz}), denoting $N_{\beta} := N^{a}_{\beta; a}$, it is found that
\begin{equation}\label{Tarek}\begin{split}  0 &= E_{(\mu\nu)}: = (g_{\mu\nu} \ \overcirc{\cal R}  - 2 \ \overcirc{R}_{(\mu\nu)})
+ g_{\mu\nu}(\sigma - h - Q) - \ 2(\sigma_{\mu\nu} - h_{\mu\nu} - Q_{(\mu\nu)})\\
& \ \ \ \ \ \ \ \ \ \ \ \ \ \ \ +  \ N^{\beta}
(\Lambda_{\mu\nu\beta} + \Lambda_{\nu\mu\beta}) +  2g_{\mu\nu}C^{\beta}N_{\beta} -
(C_{\mu}N_{\nu} + C_{\nu}N_{\mu}),\end{split}\end{equation}
which represents the {symmetric part of the generalized horizontal field equations.} 
\bigskip

\noindent Setting

\begin{equation}\label{newtensors}M_{\mu\nu} := N^{\beta}\Lambda_{\mu\nu\beta}, \ \ \ \ \ Z_{\mu\nu} := C_{\mu}N_{\nu}, \ \ \ \
Z := g^{\mu\nu}Z_{\mu\nu},\end{equation}
we conclude, by (\ref{Tarek}), that
\begin{equation}\label{first order}\overcirc{R}_{(\mu\nu)} - \frac{1}{2}\, g_{\mu\nu}\,\,\overcirc{\cal R} = T_{(\mu\nu)};\end{equation}
\begin{equation}\label{ems}T_{(\mu\nu)} := \frac{1}{2}\,g_{\mu\nu}(\sigma - h - Q + 2Z)   - (\sigma_{\mu\nu} - h_{\mu\nu} - Q_{(\mu\nu)}) +
(\frac{1}{2}\,N_{\beta}\Omega^{\beta}_{\mu\nu} - Z_{(\mu\nu)}).\end{equation}
$T_{(\mu\nu)}$ is interpreted as the
{generalized energy momentum tensor}, constructed from the
symmetric tensors $\sigma_{\mu\nu}$, $h_{\mu\nu}$, $N_{\beta}\Omega^{\beta}_{\mu\nu}$, $Q_{(\mu\nu)}$ and $Z_{(\mu\nu)}$.

\noindent Consequently, the horizontal Einstein tensor has the form
\begin{equation}\begin{split}\label{cxa} \overcirc{J}_{\mu\nu} & := \ \overcirc{R}_{\mu\nu} - \frac{1}{2}\, g_{\mu\nu} \ \overcirc{\cal R}\\
& \ = \, \{\frac{1}{2}\, g_{\mu\nu}(\sigma - h) + (h_{\mu\nu} - \sigma_{\mu\nu})\} + \frac{1}{2}\,g_{\mu\nu}(2Z - Q) + (\frac{1}{2} \, N_{\beta}\,
\Omega^{\beta}_{\mu\nu} -
Z_{(\mu\nu)} + Q_{(\mu\nu)})\\
& \ \ \, \ \ + \ \frac{1}{2}\, \mathfrak{S}_{\mu, \nu, \alpha} \ \overcirc{C}^{\alpha}_{\mu a} R^{a}_{\nu \alpha},\\[- 0.5 cm]\end{split}\end{equation}
which is subject to the identity 
\begin{equation}\label{HESxx} \,\overcirc{J}^{\mu}\!\,_{\sigma{o\atop|}\mu} = R^{a}_{\sigma\mu}\,\,\overcirc{P}^{\mu}_{a} +
\frac{1}{2}\,R^{a}_{\alpha\mu}\,\,\overcirc{P}^{\alpha\mu}\!\,_{\sigma a}.\end{equation}

On the other hand, considering the {skew-symmetric} part of equation (\ref{EEEXz}), it is found that
\begin{equation}\label{tSM} 0 = E_{[\mu\nu]} = 2\{(\gamma_{\mu\nu} - \epsilon_{\mu\nu} - \xi_{\mu\nu} +
N_{\beta}\Lambda^{\beta}_{\mu\nu}) + (M_{[\mu\nu]} - Z_{[\mu\nu]})\} +  3\mathfrak{S}_{\mu, \nu, \epsilon}C^{\epsilon}_{\nu a}R^{a}_{\epsilon\mu}.\end{equation}

\noindent Let us define
\begin{equation}\begin{split}\label{tEMF}F_{\mu\nu}: &=  (\gamma_{\mu\nu} - \xi_{\mu\nu} + \eta_{\mu\nu} + N_{\beta}\Lambda^{\beta}_{\mu\nu})
+ (M_{[\mu\nu]} - Z_{[\mu\nu]}) + \frac{3}{2}\,\mathfrak{S}_{\mu, \nu, \epsilon}C^{\epsilon}_{\nu a}R^{a}_{\epsilon\mu} \\
& = (\gamma_{\mu\nu} - \xi_{\mu\nu} + \eta_{\mu\nu}) + N^{\beta}(\gamma_{\mu\nu\beta} + \Lambda_{\beta\mu\nu}) +
(\frac{1}{2}\,N^{\beta}\Lambda_{\beta\mu\nu} - Z_{[\mu\nu]})\\& \ \ \
\, + \ \frac{3}{2}\,\mathfrak{S}_{\mu, \nu, \epsilon}C^{\epsilon}_{\nu a}R^{a}_{\epsilon\mu},\end{split}\end{equation}
then we get from (\ref{tSM}) and (\ref{tEMF})
\begin{equation}\label{CUzx}F_{\mu\nu} = \delta_{\nu}C_{\mu} - \delta_{\mu}C_{\nu}\end{equation}
and
\vspace{0.2 cm}\begin{equation}\label{GMESz}\mathfrak{S}_{\mu, \nu, \sigma} \, F_{\mu\nu{o\atop|}\sigma} = - \,\mathfrak{S}_{\mu, \nu, \sigma} \, R^{a}_{\mu\nu}
\dot{\partial_a}C_{\sigma}.\end{equation}
Accordingly, if $F_{\mu\nu}$ is interpreted as the {horizontal electromagnetic field}, then (\ref{GMESz}) represents the
{generalized horizontal Maxwell's equations} and, in view of  (\ref{CUzx}), $C_{\mu}$ is the {horizontal
electromagnetic potential}. Again, by (\ref{tEMF}), $F_{\mu\nu}$ is
constructed from the horizontal skew-symmetric fundamental tensors of the EAP-space (Table 1) together with the
skew-symmetric tensors
$N^{\beta}\gamma_{\mu\nu\beta}$, $N^{\beta}\Lambda_{\beta\mu\nu}$,
$\mathfrak{S}_{\mu, \nu, \epsilon}C^{\epsilon}_{\nu a}R^{a}_{\epsilon\mu}$
and $Z_{[\mu\nu]}$. It is thus constructed from a purely geometric standpoint.

\bigskip

Moreover, if
\begin{equation}J^{\mu} := F^{\mu\nu}\!\,_{{o\atop{|}}\nu},\end{equation}
then it is deduced that
\begin{equation}\label{CONSz}J^{\mu}\!\,_{{o\atop{|}}\mu} = \frac{1}{2}\,\{ F^{\epsilon\mu}(\,\overcirc{R}_{\mu\epsilon} - \,\overcirc{R}_{\epsilon\mu}) +
R^{a}_{\mu\nu}F^{\mu\nu}\!\,_{{o\atop{||}}a}\}.\end{equation}
In the case where the nonlinear connection $N^{\alpha}_{\mu}$ is integrable \cite{GLS}, (\ref{CONSz}) can
be viewed as a generalized conservation law and
$J^{\mu}$ as the {generalized horizontal current density}.

\bigskip

For {\bf the vertical field equations}, we consider a scalar Lagrangian formed of vertical entities, namely,
$${\cal V} := {||\lambda||} g^{ab} V_{ab},$$ where
\begin{equation}V_{ab} := T^{d}_{ea}T^{e}_{db} - C_{a}C_{b}.\end{equation}
The Euler-Lagrange equations reduce to
\begin{equation}\frac{\partial {\cal V}}{\partial \lambda_{b}} - \frac{\partial}{\partial y^{e}}\bigg(\frac{\partial {\cal V}}
{\partial \lambda_{b; \, e}}\bigg) = 0.\end{equation}
In this case, we obtain,
\begin{equation}\begin{split}\label{xax} 0 &= E_{ab}: = g_{ab} V - 2V_{ab} - 2g_{ab} (C^{e}\!\,_{||e} - C^{e}C_{e})
- 2C_{a}C_{b} - 2C^{e} T_{aeb}\\& \ \ \ \ \ \ \ \ \ \ \ \ \ + 2C_{b||a} -
2g^{de}T_{abe||d}.\end{split}\end{equation}
Considering the symmetric part of (\ref{xax}), we get
\begin{equation} 0 = E_{(ab)}: = (g_{ab} \ \overcirc{\cal S}  - 2 \ \overcirc{S}_{ab}) + g_{ab}(\bar{\sigma} - \bar{h}) -
2(\sigma_{ab} - h_{ab}),\end{equation}
so that
\begin{equation}\label{SYMx}\ \overcirc{S}_{ab} - \frac{1}{2} \, h_{ab} \ \overcirc{\cal S} = T_{ab},\end{equation}
\begin{equation}\label{xems}T_{ab}: = \frac{1}{2}\,g_{ab}(\bar{\sigma} - \bar{h})  - (\sigma_{ab} - h_{ab}),\end{equation}
where
\begin{equation}\label{cons}T^{a}\,_{b{o\atop{||}} a} = 0.\end{equation}
Consequently, in view of (\ref{SYMx}) and (\ref{cons}), $T_{ab}$ could be interpreted as the {generalized vertical}
{energy-momentum tensor}, which is, according to (\ref{xems}),
constructed from the vertical
symmetric fundamental tensors of the EAP-space (Table 1).

\bigskip

Considering the skew-symmetric part of (\ref{xax}), we conclude that if
\begin{equation}\label{MAXTx}F_{ab} :=  (\gamma_{ab} - \xi_{ab} + \eta_{ab}),\end{equation}
then
\begin{equation}\label{po}F_{ab} = \dot{\partial_b}C_{a} - \dot{\partial_a}C_{b}.\end{equation}
$F_{ab}$ is interpretted as the {generalized vertical} {electromagnetic field} and
$C_{a}$ as the {generalized vertical} {electromagnetic potential}. Moreover, $F_{ab}$ satisfies
the differential identity

\begin{equation}\label{GMEz}\mathfrak{S}_{a, b, c} \, F_{ab{o\atop||}c} = 0.\end{equation}

It is clear, by (\ref{MAXTx}), that $F_{ab}$ is constructed from the vertical skew-symmetric tensors of Table 1.

\bigskip

Finally, if we set
\begin{equation}\label{u}J^{a} := F^{ab}\!\,_{{o\atop{|}}b}\end{equation}
then,  similar to (\ref{CONSz}), $J^{a}$ represents the {generalized vertical current density} and satisfies the
{conservation law}
\begin{equation}\label{CON}J^{a}\!\,_{{o\atop{|}}a} = 0.\end{equation}
\end{mysect}


\begin{mysect}{Linearization scheme in the EAP-context}
We first give a brief account of the process of linearization in the the classical AP-context \cite{aaaa}.
\,The vector fields $\,\undersym{\lambda}{i}_{\mu}$
in the Minkowski space of special relativity are given by  \begin{equation}\,\undersym{\lambda}{i}_{\mu} = \,\undersym{\delta}{i}_{\mu},\end{equation}
where $i\in\{1, \ldots, 4\}$, $\mu\in\{1, \ldots, 4\}$ and $\,\undersym{\delta}{i}_{\mu}$ is the Kronecker delta.
To get a space which differs slightly from the flat space, it is assumed that
\begin{equation}\label{nwa}\,\undersym{\lambda}{i}_{\mu} = \,\undersym{\delta}{i}_{\mu} + \epsilon\,\undersym{h}{i}_{\mu}\end{equation}
where $\,\,\undersym{h}{i}_{\mu}\in C^{\infty}(M)$\footnote{the algebra of smooth functions on $M$} represents {\bf perturbation} terms and the
parameter $\epsilon$ is assumed to be of small magnitude compared to unity.
In this scheme, each geometric object $G$ defined in the AP-space can be expressed in the form
\begin{equation}G = \sum_{r = 0}^{p} \epsilon^{r} G^{(r)} \ \ \ \ \ (p\in \{1,2,\ldots \}),\end{equation}
where $r$ denotes the power of $\epsilon$ and $G^{(r)}$ is the coefficient of $\epsilon^{r}$. For example, using (\ref{nwa}), the
metric tensor $g_{\mu\nu}$ is found to be
\begin{eqnarray*}g_{\mu\nu} &=&\,\undersym{\lambda}{i}_{\mu}\,\,\undersym{\lambda}{i}_{\nu}\\ &=&
(\,\undersym{\delta}{i}_{\mu} + \epsilon\,\undersym{h}{i}_{\mu})
(\,\undersym{\delta}{i}_{\nu} + \epsilon\,\undersym{h}{i}_{\nu})\\&=&\delta_{\mu\nu} + \epsilon
(\,\undersym{\delta}{i}_{\mu}\,\,\undersym{h}{i}_{\nu} +  \,\undersym{\delta}{i}_{\nu}\,\,\undersym{h}{i}_{\mu}) +  \ \epsilon^{2}(\,\undersym{h}{i}_{\mu}
\,\,\undersym{h}{i}_{\nu})\\
\\[ - 1 cm]\end{eqnarray*}

This {linearization} proceedure was carried out by Mikhial and Wanas for the
GFT (\cite{aaaa}, \cite{aa}). The results obtained
showed a perfect agreement of GFT, in the first order of approximation, with both
general relativity and Maxwell's theories. In addition, the
linearized theory led to the prediction
of new features, namely, the existence of a
{mutual interaction} between both gravitational and electromagnetic fields \cite{aaaa}.

\bigskip

In the present paper, we carry out a linearization of the field equations obtained in \cite{WNA}.
We will use the same conventions usually
followed by physicists in the current literature in which, for example, mesh and world indices are mixed.
The treatment here is, therefore, somewhat
less rigorous than our previous paper \cite{WNA}.

\bigskip

In analogy to the above linearization process,
we assume, in the context of EAP-space, that
\begin{equation}\label{linear}\,\undersym{\lambda}{i}_{\mu} = \,\undersym{\delta}{i}_{\mu} + \epsilon\,\undersym{h}{i}_{\mu}(x) +
\varepsilon\,\undersym{k}{i}_{\mu}(y)
, \ \ \ \ \ \ \,
\undersym{\lambda}{i}_{a} = \,\undersym{\delta}{i}_{a} + \epsilon\,\undersym{h}{i}_{a}(x) + \varepsilon\,\undersym{k}{i}_{a}(y),\end{equation}
where $i\in\{1, \ldots, 4\}$, $\mu\in\{1, \ldots, 4\}$, $a\in \{1, \ldots, 4\}$ and
$\,\,\undersym{h}{i}_{\mu}, \,\,\undersym{k}{i}_{\mu}, \,\,\undersym{h}{i}_{a}, \,\,\undersym{k}{i}_{a}\in C^{\infty}(TM)$ represent
horizontal and vertical perturbation terms. Here $\,\undersym{h}{i}_{\mu}, \,\,\undersym{h}{i}_{a}$ are functions of the
{positional} argument $x$ only whereas $\,\undersym{k}{i}_{\mu}, \,\,\undersym{k}{i}_{a}$ are functions of the {directional}
argument $y$ only.
Moreover, the
parameters $\epsilon$ and $\varepsilon$ are assumed to be of small magnitude compared to unity:
$O(\epsilon)\backsimeq O(\varepsilon) \preccurlyeq 1$ so that all terms of
order $\epsilon^{2}$, $\epsilon\varepsilon$, $\varepsilon^{2}$ or higher
orders can be neglected. This means that we are dealing with
a {\bf weak field}. Finally, $e$ will denote either $\epsilon$ or
$\varepsilon$ so that, for example, $O(e^{2})$
will mean either $O(\epsilon^{2})$, $O(\epsilon\varepsilon)$ or
$O(\varepsilon^{2})$.

\bigskip

We interpret $x^{1}, x^{2}$ and $x^{3}$ as {space} coordinates wheresas $x^{4}$ is
taken to be the {time} coordinate.
On the other hand, the vector $y^{a}$ is attatched as an {\it internal} variable to each point $x^{\mu}$. In this sense,
$y^{a}$ may be regarded as the {spacetime fluctuation} (the micro-internal freedom) associated to
the point $x^{\mu}$ (\cite{GLS}, \cite{V}.) We will return to the interpretation of
the directional argument $y$ later on.
\end{mysect}


\begin{mysect}{First order approximation}

We now carry out the task of computing the fundamental tensors of the EAP-space under the
linearization assumption (\ref{linear}). The vertical (resp. horizontal) counterpart is obtained under
no condition (resp. in the
Cartan type case). Though, we have dealt with many cases in our previous paper \cite{WNA},
it is {\it the Cartan type case that
can lend itself to the process of linearization}. This is because the nonlinear connection and hence
all geometric objects considered are expressed explictly in terms
of the fundamental vector fields. On the other hand, a linearization of the horizontal field equations in
the Berwald type case {gives nothing new}, since the derived horizontal field equations in this case
actually {coincide with those of the GFT} \cite{WNA}.\footnote{This implies, in particular,
that both Maxwell's and general relativity theories
are an outcome of our field equations under the Berwald condition.}

\bigskip

In preparation to what follows, we set
\begin{equation}z_{\mu\nu}: = (\,\undersym{h}{\mu}_{\nu} + \,\undersym{h}{\nu}_{\mu}),  \ \ \
w_{\mu\nu} := (\,\undersym{k}{\mu}_{\nu} + \,\undersym{k}{\nu}_{\mu}); \ \ \ \ \ \undersym{h}{\mu}_{\nu} := \,\undersym{\delta}{i}_{\mu}\,\,
\undersym{h}{i}_{\nu}, \ \ \ \undersym{k}{\mu}_{\nu} := \,\undersym{\delta}{i}_{\mu}\,\,
\undersym{k}{i}_{\nu}\\[- 0.4 cm]\end{equation}

\begin{equation}z_{ab}: = (\,\undersym{h}{a}_{b} + \,\undersym{h}{b}_{a}), \ \ \ w_{ab}: =
(\,\undersym{k}{a}_{b} + \,\undersym{k}{b}_{a}); \ \ \ \ \ \undersym{h}{a}_{b} := \,\undersym{\delta}{i}_{a}\,\,
\undersym{h}{i}_{b}, \ \ \ \undersym{k}{a}_{b} := \,\undersym{\delta}{i}_{a}\,\,
\undersym{k}{i}_{b}.\end{equation}

Then, in view of (\ref{linear}), we obtain

\begin{theorem}\label{ATEF} To the first order of approximation, we have
\begin{description}

\item [(a)]$\,\undersym{\lambda}{i}^{\mu}\backsimeq \,\,\undersym{\delta}{i}^{\mu} -
\epsilon \,\,\undersym{h}{\mu}_{i}(x) - \varepsilon\,\,\undersym{k}{\mu}_{i}(y)$; \ \ \ $\undersym{\lambda}{i}^{a}\backsimeq \,\,
\undersym{\delta}{i}^{a} -
\epsilon \,\,\undersym{h}{a}_{i}(x) - \varepsilon\,\,\undersym{k}{a}_{i}(y).$

\item [(b)] $g_{\mu\nu} \backsimeq \delta_{\mu\nu} + \epsilon z_{\mu\nu}(x) + \varepsilon w_{\mu\nu}(y); \ \ \ \ g^{\mu\nu}\backsimeq \delta_{\mu\nu} -
\epsilon z_{\mu\nu}(x) - \varepsilon w_{\mu\nu}(y).$

\item [(c)] $g_{ab}\backsimeq \delta_{ab} + \epsilon z_{ab}(x) + \varepsilon w_{ab}(y); \ \ \ \ \ \ g^{ab}\backsimeq \delta_{ab} -
\epsilon z_{ab}(x) - \varepsilon w_{ab}(y).$

\item [(d)]$C^{a}_{bc} \backsimeq \varepsilon(\,\undersym{k}{a}_{b; c})(y).$

\item [(e)] $T^{a}_{bc} \backsimeq \varepsilon(\,\undersym{k}{a}_{b; c} - \,\undersym{k}{a}_{c; b})(y).$

\item [(f)]$C_{b} \backsimeq \varepsilon (\,\undersym{k}{a}_{b; a} - \,\undersym{k}{a}_{a; b})(y).$

\item [(g)] $\,\overcirc{C}^{a}_{bc} \backsimeq \frac{1}{2}\,\varepsilon(w_{ab; c} + w_{ac; b} - w_{bc; a})(y).$

\item [(h)]$\gamma^{a}_{bc} \backsimeq \varepsilon\{\,\undersym{k}{a}_{b; c} - \frac{1}{2}\,(w_{ab; c} +
w_{ac; b} - w_{bc; a})\}(y).$

\item [(j)]$\Omega^{a}_{bc} \backsimeq \varepsilon
\{(\,\undersym{k}{a}_{b;c} + \,\undersym{k}{a}_{c; b}) - (w_{ab; c} +
w_{ac; b} - w_{bc; a})\}(y).$

\end{description}

\end{theorem}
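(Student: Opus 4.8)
The plan is to read each item as a first-order Taylor expansion in the two small parameters, obtained by substituting the fundamental ansatz (\ref{linear}) into the defining formula of the object concerned and discarding every contribution of order $O(e^2)$. The organising observation, which is what ultimately forces items (d)--(j) to depend on $y$ alone, is that the horizontal perturbations $\undersym{h}{i}_\mu, \undersym{h}{i}_a$ are functions of $x$ only while the vertical ones $\undersym{k}{i}_\mu, \undersym{k}{i}_a$ are functions of $y$ only; hence every directional derivative $\dot{\partial_c}$ annihilates the $h$-terms (and the constant Kronecker terms), and only the $k$-terms survive. A second, repeatedly used, principle is that at first order a vertical covariant derivative of an already $O(e)$ quantity coincides with its partial derivative, since the connection coefficients are themselves $O(e)$ and their contribution is $O(e^2)$.

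First I would prove (a) by inverting the duality relation $\undersym{\lambda}{i}^\mu\undersym{\lambda}{i}_\nu = \delta^\mu_\nu$. Positing $\undersym{\lambda}{i}^\mu = \undersym{\delta}{i}^\mu + \epsilon A^\mu_i + \varepsilon B^\mu_i + O(e^2)$, the zeroth order is automatic and the separate first-order balances in $\epsilon$ and $\varepsilon$ give $A^\mu_\nu = -\undersym{h}{\mu}_\nu$ and $B^\mu_\nu = -\undersym{k}{\mu}_\nu$; the vertical half is identical with $(a,b)$ replacing $(\mu,\nu)$. Items (b) and (c) then follow by inserting (\ref{linear}) into $g_{\mu\nu} = \undersym{\lambda}{i}_\mu\undersym{\lambda}{i}_\nu$ and its vertical analogue, which reproduces exactly the expansion already displayed in Section~3 with the symmetrised combinations collapsing into $z_{\mu\nu}, w_{\mu\nu}$ (respectively $z_{ab}, w_{ab}$); the contravariant components come out either by inverting the covariant metric to first order or, more transparently, from $g^{\mu\nu} = \undersym{\lambda}{i}^\mu\undersym{\lambda}{i}^\nu$ together with (a).

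The substance of the theorem lies in (d). Starting from the defining relation $C^a_{bc} = \undersym{\lambda}{i}^a\,\dot{\partial_c}\undersym{\lambda}{i}_b$ of \cite{WNA}, the directional derivative kills $\undersym{\delta}{i}_b$ and the $x$-dependent term $\epsilon\undersym{h}{i}_b(x)$, leaving $\dot{\partial_c}\undersym{\lambda}{i}_b = \varepsilon\,\undersym{k}{i}_{b;c}$; contracting with $\undersym{\lambda}{i}^a = \undersym{\delta}{i}^a + O(e)$ and dropping $O(e^2)$ gives $C^a_{bc} \backsimeq \varepsilon\,\undersym{k}{a}_{b;c}$. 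Items (e) and (f) are then purely algebraic consequences: (e) is the skew part $T^a_{bc} = C^a_{bc} - C^a_{cb}$, and (f) is obtained by the contraction $C_b = T^a_{ba}$ (equivalently as the trace of the contortion), both giving the same first-order vector.

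Finally, for the Levi-Civita object (g) I would expand $\overcirc{C}^a_{bc} = \tfrac12\,g^{ad}(\dot{\partial_c} g_{db} + \dot{\partial_b} g_{dc} - \dot{\partial_d} g_{bc})$ using the metric of (c); because $\dot{\partial}$ annihilates both $\delta_{ab}$ and $z_{ab}(x)$, only the $\varepsilon w_{ab}(y)$ pieces remain, and $g^{ad} = \delta_{ad} + O(e)$ then produces the stated symmetric combination. Items (h) and (j) close the argument algebraically: the contortion $\gamma^a_{bc} = C^a_{bc} - \overcirc{C}^a_{bc}$ subtracts (g) from (d) to give (h), and $\Omega^a_{bc} = \gamma^a_{bc} + \gamma^a_{cb}$ symmetrises (h) to give (j), the Christoffel-type bracket being invariant under $b\leftrightarrow c$. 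I expect the only real difficulty to be \emph{clerical rather than conceptual}: keeping the mesh index $i$ correctly contracted against world indices through the Kronecker deltas, and verifying each index position and sign against the definitions of Section~2, so that the symmetrisations defining $z, w$ and the skew/symmetric splittings emerge correctly.
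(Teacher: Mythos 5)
Your proposal is correct and follows essentially the same route as the paper: part (a) by positing an expansion of $\,\undersym{\lambda}{i}^{\mu}$ and balancing orders in the duality relation $\,\undersym{\lambda}{i}^{\mu}\,\undersym{\lambda}{i}_{\nu}=\delta^{\mu}_{\nu}$, the metric and connection objects by direct substitution of (\ref{linear}) into their defining formulas with $\dot{\partial}$ annihilating the $x$-dependent terms, and the remaining items by algebraic combination (the paper proves only (a), the $C^{a}_{bc}$ computation and (g) explicitly, leaving the rest as ``similar''). Your observations that the $h$-terms drop out under $\dot{\partial_c}$ and that connection corrections to covariant derivatives of $O(e)$ quantities are $O(e^{2})$ are exactly the mechanisms the paper relies on.
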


\begin{proof} We prove {\bf (a)}, {\bf (e)} and {\bf (g)} only. The rest can be proved in a similar manner.

\begin{description}\item [(a)] Assume, to the first order of $e$, that
$\, \undersym{\lambda}{i}^{\mu} = \,\undersym{D}{i}^{\mu} + \epsilon \,\undersym{H}{i}^{\mu} + \varepsilon \,\undersym{K}{i}^{\mu}.$
Then, in view of
$\,\undersym{\lambda}{i}^{\mu}\,\,\undersym{\lambda}{i}_{\nu} = \delta^{\mu}_{\nu},$
we have
$\,\undersym{D}{i}^{\mu} = \,\undersym{\delta}{i}_{\mu},$ \
$\epsilon\{\,\undersym{H}{i}^{\mu}\,\,\undersym{\delta}{i}_{\nu} + \,\undersym{h}{i}_{\nu}\,\,\undersym{\delta}{i}_{\mu}\} = 0$ and
$\varepsilon\{\,\undersym{K}{i}^{\mu}\,\,\undersym{\delta}{i}_{\nu} + \,\undersym{k}{i}_{\nu}\,\,\undersym{\delta}{i}_{\mu}\} = 0.$
By the second relation above, we get $\,\undersym{H}{i}^{\mu}\,\,\undersym{\delta}{i}_{\nu} = - \,\,\,\undersym{h}{i}_{\nu}\,\,\undersym{\delta}{i}_{\mu}.$
Multiplying by $\undersym{\delta}{j}_{\nu}$, we obtain
$\label{txr}\,\undersym{H}{j}^{\mu} = - (\,\undersym{\delta}{j}_{\nu}\,\,\undersym{h}{i}_{\nu})
\,\,\undersym{\delta}{i}_{\mu} = - \,\,\undersym{h}{i}_{j}\,\,\undersym{\delta}{i}_{\mu} := - \,\undersym{h}{\mu}_{j}.$
Similarily, by $\varepsilon\{\,\undersym{K}{i}^{\mu}\,\,\undersym{\delta}{i}_{\nu} + \,\undersym{k}{i}_{\nu}\,\,\undersym{\delta}{i}_{\mu}\} = 0,$
we conclude that $\label{ttxr}\,\undersym{K}{j}^{\mu} = - \,\undersym{k}{\mu}_{j}.$ The expression of $\,\,\undersym{\lambda}{i}^{a}$ is
obtained in a similar manner, using the relation $\,\undersym{\lambda}{i}^{a}\,\undersym{\lambda}{i}_{b} = \delta^{a}_{b}.$

\vspace{- 0.05 cm}\item [(e)] By {\bf (a)}, we have
\begin{eqnarray*}C^{a}_{bc} &=& \,\undersym{\lambda}{i}^{a}\,\dot{\partial_{c}}\,\,\undersym{\lambda}{i}_{b} = \{\,\undersym{\delta}{i}_{a} -
\epsilon\,\undersym{h}{a}_{i} - \varepsilon\,\,\undersym{k}{a}_{i} + O(e^{2})\}\{\varepsilon \,\,\undersym{k}{i}_{b; c}\} =
\varepsilon \,\,\undersym{k}{a}_{b; c} + O(e^{2}).\\
\end{eqnarray*}

\vspace{- 0.8 cm}
\item [(g)] By {\bf (c)}, we have $g^{ad} = \delta_{ad} - \epsilon z_{ad}(x) - \varepsilon w_{ad}(y) + O(e^{2})$. Moreover,
$$g_{cd; b} = \varepsilon w_{cd; b}(y) + O(e^{2}), \,g_{bd; c} = \varepsilon w_{bd; c}(y) + O(e^{2}), \, g_{bc; d} = \varepsilon w_{bc; d}(y) + O(e^{2}).$$
Consequently,
\begin{eqnarray*}\overcirc{C}^{a}_{bc} &=& \frac{1}{2} \,g^{ad}\{g_{cd; b} + g_{bd; c} - g_{bc; d}\}\\
&\backsimeq& \frac{1}{2}\,\varepsilon \delta_{ad}\{w_{dc; b} + w_{db; c} - w_{bc; d}\}(y) =  \frac{1}{2}\,\varepsilon \{w_{ab; c} + w_{ac; b} - w_{bc; a}\}(y).\\
\end{eqnarray*}

\end{description}
\vspace{- 1.5 cm}\end{proof}

In the light of the above theorem, using the relevant definitions, we obtain

\begin{proposition}\label{Sumx} To the first order of $\varepsilon$, the following hold:
\begin{description}

\item [(a)] $\epsilon_{ab} \backsimeq \varepsilon(\,\undersym{k}{d}_{a; bd} - \,\undersym{k}{d}_{b; ad})(y),$

\item [(b)] $\theta_{ab}\backsimeq \varepsilon
(\,\undersym{k}{d}_{a; bd} + \,\undersym{k}{d}_{b; ad} - 2\,\undersym{k}{d}_{d; ab})(y),$

\item [(c)]$\xi_{ab}\backsimeq \varepsilon\{\,\undersym{k}{a}_{b; dd} - \frac{1}{2}\,(w_{ab; dd} +
w_{ad; bd} - w_{bd; ad})\}(y)$,

\item [(d)] $\psi_{ab} \backsimeq \varepsilon\{(\,\undersym{k}{d}_{a; bd} + \,\undersym{k}{d}_{b; ad}) - (w_{ad; bd} +
w_{bd; ad} - w_{ab; dd})\}(y)$,

\item [(e)]  $\chi_{ab}\backsimeq \varepsilon(\,\undersym{k}{d}_{a; bd} - \,\undersym{k}{d}_{b; ad})(y).$
\end{description}

\end{proposition}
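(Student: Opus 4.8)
The plan is to exploit the single structural fact supplied by Theorem~\ref{ATEF}: each object entering these five definitions---$C_a$, $T^d_{ab}$, $\gamma^a_{bc}$ and $\Omega^a_{bc}$---is already of order $\varepsilon$, and the metric is $g_{ab}=\delta_{ab}+O(\varepsilon)$. This has two consequences that trivialize almost all of the work. First, any vertical covariant derivative ${}_{||d}$ acting on such a quantity collapses to the bare directional derivative $\dot{\partial}_d$ (written with a semicolon): the vertical connection coefficients all vanish at zeroth order and are themselves $O(\varepsilon)$ (being $\gamma^a_{bc}$ or $\overcirc{C}^a_{bc}$, cf.\ Theorem~\ref{ATEF}(h),(g)), so their products with an $O(\varepsilon)$ tensor are $O(\varepsilon^{2})$ and are discarded. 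Second, raising or lowering any index on an $O(\varepsilon)$ tensor may be carried out with $\delta_{ab}$ rather than $g_{ab}$, the correction again being $O(\varepsilon^{2})$; in particular $\gamma_{ab}{}^d\backsimeq\gamma^a_{bd}$, so all index juggling in the definitions becomes purely formal.

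With this in hand I would treat (a) and (b) together. By the first remark they reduce to $\epsilon_{ab}\backsimeq C_{a;b}-C_{b;a}$ and $\theta_{ab}\backsimeq C_{a;b}+C_{b;a}$. Substituting $C_b\backsimeq\varepsilon(\undersym{k}{d}_{b;d}-\undersym{k}{d}_{d;b})$ from Theorem~\ref{ATEF}(f) and differentiating once more with respect to $y$, the trace term $\undersym{k}{d}_{d}$ acquires a second derivative $\undersym{k}{d}_{d;ab}$ that is symmetric in $a,b$; hence it cancels in the skew combination, giving (a), and doubles to $-2\undersym{k}{d}_{d;ab}$ in the symmetric combination, giving (b). Commutativity of the partial derivatives $\dot{\partial}$ then lets me rewrite $;db$ as $;bd$ to match the stated forms.

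The three divergence-type tensors (c), (d), (e) follow from the same two-step reduction together with a single contracted derivative. The collapse of the covariant derivative gives $\xi_{ab}=\gamma_{ab}{}^d{}_{||d}\backsimeq\gamma^a_{bd;d}$, $\psi_{ab}=\Omega^d_{ab||d}\backsimeq\Omega^d_{ab;d}$ and $\chi_{ab}=T^d_{ab||d}\backsimeq T^d_{ab;d}$. I would then insert Theorem~\ref{ATEF}(h),(j),(e) respectively, relabel the dummy indices so that the contracted index sits in the derivative slot, and differentiate. The symmetry $w_{ab}=\undersym{k}{a}_b+\undersym{k}{b}_a=w_{ba}$ of the perturbation is exactly what converts the raw terms $w_{da;bd}$, $w_{db;ad}$ into the symmetric combination $w_{ad;bd}$, $w_{bd;ad}$ appearing in (d). A convenient built-in consistency check is that (c) reproduces Theorem~\ref{ATEF}(h) with $c=d$ contracted, and that $\chi_{ab}\backsimeq\epsilon_{ab}$ at this order, since both equal $\varepsilon(\undersym{k}{d}_{a;bd}-\undersym{k}{d}_{b;ad})$.

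The computations are otherwise routine substitution and relabelling; the only genuine point requiring care---the main obstacle---is the clean justification of the two reductions in the first paragraph, namely the bookkeeping that every connection term and every $g$-versus-$\delta$ discrepancy is quadratic in the small parameters and may be dropped. Once that is established, the rest is substitution, relabelling of summed indices, and invocation of the symmetry of $w$ together with the commutativity of the directional derivatives.
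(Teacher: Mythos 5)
Your proposal is correct and follows exactly the route the paper intends: the paper gives no explicit proof, stating only that the results follow ``in the light of the above theorem, using the relevant definitions,'' and your argument supplies precisely those details --- the collapse of $||$ to the directional derivative and of $g$ to $\delta$ at first order, followed by substitution of Theorem~\ref{ATEF}(e),(f),(h),(j), index relabelling, and use of the symmetry of $w_{ab}$ and the commutativity of the $\dot{\partial}$'s. Your consistency checks (e.g.\ $\chi_{ab}\backsimeq\epsilon_{ab}$ at this order) are also borne out by the stated formulas.
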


Let $W_{abc}: = \frac{1}{2}(w_{ab; c} + w_{ac; b} - w_{bc; a}).$ Then the following tensors contain no terms of first order:

\begin{proposition} \label {uvx}To the second order of $\varepsilon$, the following hold:

\begin{description}

\item[(a)] $\gamma_{ab}\backsimeq
\varepsilon^{2}\{(\,\,\undersym{k}{c}_{d; c} - \,\,\undersym{k}{c}_{c; d})(\,\,\undersym{k}{a}_{b; d} -
W_{abd})\}(y)$

\item[(b)] $\phi_{ab}\backsimeq \varepsilon^{2}\{
(\,\,\undersym{k}{c}_{d; c} - \,\,\undersym{k}{c}_{c; d})(\,\,\undersym{k}{d}_{a; b} +
\,\,\undersym{k}{d}_{b; a} - 2W_{dab})\}(y).$

\item[(c)] $\eta_{ab}\backsimeq \varepsilon^{2}\{
(\,\,\undersym{k}{c}_{d; c} - \,\,\undersym{k}{c}_{c; d})(\,\,\undersym{k}{d}_{a; b} -
\,\,\undersym{h}{d}_{b; a})\}(y).$

\item [(d)] $\omega_{ab}\backsimeq \varepsilon^{2}\{
(\,\,\undersym{k}{d}_{a: c} - W_{dac})(\,\,\undersym{k}{c}_{b; d} - W_{cbd})\}(y).$

\item [(e)] $\sigma_{ab}\backsimeq  \varepsilon^{2}\{
(\,\,\undersym{k}{d}_{c; a} - W_{dca})(\,\,\undersym{k}{c}_{d; b} - W_{cdb})\}(y).$

\item [(f)] $\alpha_{ab}\backsimeq  \varepsilon^{2}\{(\,\,\undersym{k}{c}_{a; c} - \,\,\undersym{k}{c}_{c; a})
(\,\,\undersym{k}{c}_{b; c} - \,\,\undersym{k}{c}_{c; b})\}(y).$

\item [(g)] $h_{ab}\backsimeq \varepsilon^{2}\{(\,\,\undersym{k}{d}_{c; a} -
W_{dca})(\,\,\undersym{k}{c}_{b; d} - W_{cbd}) + (\,\,\undersym{k}{d}_{a; c} -
W_{dac})(\,\,\undersym{k}{c}_{d; b} - W_{cdb})\}(y).$

\item [(h)] $U_{ab}\backsimeq \varepsilon^{2}\{(\,\,\undersym{k}{d}_{c; a} -
W_{dca})(\,\,\undersym{k}{c}_{b; d} - W_{cbd}) - (\,\,\undersym{k}{d}_{a; c} -
W_{dac})(\,\,\undersym{k}{c}_{d; b} - W_{cdb})\}(y).$

\end{description}
\end{proposition}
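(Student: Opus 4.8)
The whole proposition rests on one structural observation combined with Theorem \ref{ATEF}. Each of the eight tensors is, by its defining formula in Table 1, a \emph{bilinear contraction of two of the primitive vertical objects} $C_d$, $\gamma^a_{bc}$, $\Omega^a_{bc}$ and $T^a_{bc}$: explicitly $\gamma_{ab}=C_d\,\gamma_{ab}{}^{d}$, $\phi_{ab}=C_d\,\Omega^d_{ab}$, $\eta_{ab}=C_d\,T^d_{ab}$, $\alpha_{ab}=C_aC_b$, whereas $\sigma_{ab}$, $\omega_{ab}$ (and hence $h_{ab}$, $U_{ab}$) are quadratic in the $\gamma^{\bullet}_{\bullet\bullet}$. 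Now Theorem \ref{ATEF}(d)--(j) tells us that every one of these primitives \emph{has vanishing zeroth-order part}: $C_d$, $T^d_{ab}$, $\gamma^a_{bc}$ and $\Omega^a_{bc}$ are all $O(\varepsilon)$ and depend on $y$ alone. Therefore any product of two of them begins at order $\varepsilon^{2}$; this is precisely why no first-order term survives and why each result is a clean $\varepsilon^{2}$-expression in $y$, as asserted.

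The computation itself is then a single substitute-and-multiply for each item. Writing $\gamma^a_{bc}\backsimeq\varepsilon(\undersym{k}{a}_{b;c}-W_{abc})$ and, from (j), $\Omega^a_{bc}\backsimeq\varepsilon\{\undersym{k}{a}_{b;c}+\undersym{k}{a}_{c;b}-2W_{abc}\}$, together with $C_d\backsimeq\varepsilon(\undersym{k}{c}_{d;c}-\undersym{k}{c}_{c;d})$ and $T^d_{ab}$ from (e), I would insert these into each defining contraction and keep only the leading term. In $\sigma_{ab}=\gamma^c_{da}\gamma^d_{cb}$, $\omega_{ab}=\gamma^c_{ad}\gamma^d_{bc}$ and $\alpha_{ab}=C_aC_b$ every index is already summed in matched up/down pairs, so no metric enters and the product is immediate. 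Where a raised index is actually needed, as in forming $\gamma_{ab}{}^{d}$ from $\gamma^a_{bc}$ inside $\gamma_{ab}=C_d\gamma_{ab}{}^{d}$, it may be moved with the flat metric $\delta_{ab}$ alone, giving $\gamma_{ab}{}^{d}\backsimeq\gamma^a_{bd}$, since the corrections to $g^{ab}$ in Theorem \ref{ATEF}(c) contribute only at $O(\varepsilon^{3})$ after multiplication by the second $O(\varepsilon)$ factor. Items (a)--(f) then follow at once.

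For (g) and (h) I would handle $h_{ab}$ and $U_{ab}$ together, as the sum and difference of the two quadratic contractions $\gamma^c_{da}\gamma^d_{bc}$ and $\gamma^c_{ad}\gamma^d_{cb}$ that appear in their Table 1 definitions; once those two products are formed by the same step, one merely adds and subtracts. I expect the \emph{only} genuine friction to be index bookkeeping rather than any analytic difficulty: after substitution the contracted (dummy) pair does not sit in the arrangement printed in the statement, so a relabelling $c\leftrightarrow d$ of the summed indices is needed to bring each expression into the stated normal form, and one must be careful that the two products feeding $h_{ab}$ and $U_{ab}$ carry the index orderings of Table 1, which differ from those producing $\sigma_{ab}$ and $\omega_{ab}$ in (d),(e).
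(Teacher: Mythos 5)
Your proposal is correct and coincides with the paper's own (unwritten) argument: the paper states Proposition \ref{uvx} as an immediate consequence of Theorem \ref{ATEF} and the Table 1 definitions, precisely because each tensor is a bilinear contraction of the primitives $C_d$, $\gamma^a_{bc}$, $\Omega^a_{bc}$, $T^a_{bc}$, all of which are $O(\varepsilon)$ with vanishing zeroth-order part, so the leading term is the $\varepsilon^2$ product with indices moved by the flat metric. Your substitute-and-relabel computation reproduces items (a)--(h) exactly (item (c) of the statement contains an evident typo, $\undersym{h}{d}_{b;a}$ for $\undersym{k}{d}_{b;a}$, which your derivation implicitly corrects).
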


\begin{corollary} Assume that (\ref{linear}) holds. Then, up to the first order of approximation, the vertical fundamental geometric objects of the
EAP-space are functions of the directional argument $y$ only.
\end{corollary}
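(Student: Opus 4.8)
The plan is to obtain the Corollary as a structural consequence of the first-order formulas already established, rather than by any fresh computation. First I would observe that every vertical fundamental tensor in Table~1 --- that is, $\xi_{ab}$, $\gamma_{ab}$, $\eta_{ab}$, $\phi_{ab}$, $\chi_{ab}$, $\psi_{ab}$, $\epsilon_{ab}$, $\theta_{ab}$, $U_{ab}$, $h_{ab}$, $\sigma_{ab}$, $\omega_{ab}$ and $\alpha_{ab}$ --- is by definition built from the basic vertical objects $C^{a}_{bc}$, $T^{a}_{bc}$, $C_{a}$, $\overcirc{C}^{a}_{bc}$, $\gamma^{a}_{bc}$ and $\Omega^{a}_{bc}$ through three operations only: contraction (e.g. $\gamma_{ab} = C_{d}\gamma^{d}_{ab}$), multiplication (e.g. $\sigma_{ab}$, $\alpha_{ab}$), and a single vertical covariant differentiation $||d$ (e.g. $\xi_{ab} = \gamma_{ab}{}^{d}{}_{||d}$, $\epsilon_{ab} = C_{a||b} - C_{b||a}$). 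It therefore suffices to show that each basic object is $y$-dependent only to leading order, and that these three operations preserve this property.

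The essential point is that every basic vertical object arises by applying the vertical partial derivative $\dot{\partial_{a}} = \partial/\partial y^{a}$ (written as a semicolon) to the field. Acting on the linearized field (\ref{linear}), this operator kills the constant Kronecker term and the purely $x$-dependent perturbation of order $\epsilon$, retaining only the $\varepsilon$-term depending on $y$; this is exactly the cancellation exhibited in the proof of Theorem~\ref{ATEF}(e). Hence Theorem~\ref{ATEF}(d)--(j) already yields, to first order, a pure $y$-dependence for each of $C^{a}_{bc}$, $T^{a}_{bc}$, $C_{a}$, $\overcirc{C}^{a}_{bc}$, $\gamma^{a}_{bc}$ and $\Omega^{a}_{bc}$, the horizontal perturbation having disappeared.

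It then remains to check that building the Table~1 tensors from these blocks does not reintroduce any $x$-dependence to first order. Contraction and multiplication plainly preserve pure $y$-dependence, so the only operation to scrutinize is the vertical covariant derivative $||d$ entering $\xi_{ab}$, $\chi_{ab}$, $\psi_{ab}$, $\epsilon_{ab}$ and $\theta_{ab}$. Writing it as $\dot{\partial_{d}}(\cdot)$ plus connection terms, and noting that the vertical connection coefficients $C^{a}_{bc}$ and $\gamma^{a}_{bc}$ are themselves of order $\varepsilon$ and $y$-dependent by parts (d) and (h), the connection contribution to the covariant derivative of any first-order quantity is of order $\varepsilon^{2}$ and is discarded; the leading behaviour is governed by $\dot{\partial_{d}}$ and remains a function of $y$ only. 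Propositions~\ref{Sumx} and \ref{uvx} carry out exactly these computations term by term and confirm the conclusion for every vertical tensor.

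The main obstacle, and indeed the only nontrivial point, is precisely this last step: guaranteeing that passing from the partial to the covariant vertical derivative cannot smuggle in $x$-dependence through the connection. This is settled by order-counting under the weak-field hypothesis $O(e^{2})\backsimeq 0$: since each vertical connection coefficient is already first order, its product with any first-order object is second order, and the $x$-dependent part of such a product --- being of order $\epsilon\varepsilon$ or higher --- is negligible. With this, the Corollary follows directly from Theorem~\ref{ATEF} and Propositions~\ref{Sumx} and \ref{uvx}.
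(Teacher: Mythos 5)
Your proposal is correct and is essentially the paper's own argument: the corollary is obtained by inspecting the explicit first-order formulas of Theorem \ref{ATEF} and Propositions \ref{Sumx} and \ref{uvx}, every one of which depends on the directional argument $y$ alone (the tensors in Proposition \ref{uvx} being $O(\varepsilon^{2})$, hence zero at first order). Your structural gloss --- that $\dot{\partial_a}$ annihilates the Kronecker term and the $x$-dependent perturbation of (\ref{linear}), while products of first-order objects and the connection part of the vertical covariant derivative only contribute at order $e^{2}$ --- is precisely the mechanism underlying those computations rather than a genuinely different route.
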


\end{mysect}


\begin{mysect}{Linearization in the Cartan-type case}
We now consider the Cartan-type case. In this case, the nonlinear connection coefficients $N^{a}_{\mu}$ are given by
$N^{a}_{\mu} = y^{b}(\,\,\undersym{\lambda}{i}^{a}\partial_{\mu}\,\undersym{\lambda}{i}_{b})$\cite{EAP}. Consequently,
all geometric objects of the EAP-space are expressed explicitely in terms of the fundamental vector fields forming
the parallelization.

\bigskip

Similar to Theorem \ref{ATEF}, we have the following
\begin{theorem}\label{fc} Assume that the canonical $d$-connection is of Cartan type. Then, up to the first order
approximation, the following hold:
\begin{description}
\item [(a)] $N^{a}_{\mu} \backsimeq \epsilon\{y^{b}(\,\,\undersym{h}{a}_{b, \mu}(x))\}$.

\item [(b)]$\Gamma^{\alpha}_{\mu\nu} \backsimeq \epsilon(\,\undersym{h}{\alpha}_{\mu, \nu})(x), \ \ \
\Gamma^{a}_{b\nu} \backsimeq \epsilon(\,\undersym{h}{a}_{b, \nu})(x).$ Consequently, $N_{\nu}: = N^{a}_{\nu; a} \backsimeq \epsilon \,\,\undersym{h}{a}_{a, \nu}(x).$

\item [(c)] $\Lambda^{\alpha}_{\mu\nu} \backsimeq \epsilon(\,\undersym{h}{\alpha}_{\mu, \nu} - \,\undersym{h}{\alpha}_{\nu, \mu})(x).$

\item [(d)]$C_{\mu} \backsimeq \epsilon (\,\undersym{h}{\alpha}_{\mu, \alpha} - \,\undersym{h}{\alpha}_{\alpha, \mu})(x).$

\item [(e)] $\,\overcirc{\Gamma}^{\alpha}_{\mu\nu} \backsimeq \frac{1}{2}\,\epsilon(z_{\mu\alpha, \nu} + z_{\nu\alpha, \mu} - z_{\mu\nu, \alpha})(x).$

\item [(f)]$\gamma^{\alpha}_{\mu\nu} \backsimeq \epsilon\{\,\undersym{h}{\alpha}_{\mu, \nu} - \frac{1}{2}\,(z_{\mu\alpha, \nu} +
z_{\nu\alpha, \mu} - z_{\mu\nu, \alpha})\}(x).$

\item [(g)]$\Omega^{\alpha}_{\mu\nu} \backsimeq \epsilon
\{(\,\undersym{h}{\alpha}_{\mu,\nu} + \,\undersym{h}{\alpha}_{\nu, \mu}) - (z_{\mu\alpha, \nu} +
z_{\nu\alpha, \mu} - z_{\mu\nu, \alpha})\}(x).$
\end{description}

\begin{proof} We prove {\bf (a)} only. The rest is similar. We have
\begin{eqnarray*}N^{a}_{\mu} &=& y^{b}(\,\,\undersym{\lambda}{i}^{a}\partial_{\mu} \,\undersym{\lambda}{i}_{b})\\&=&y^{b}\{\,\undersym{\delta}{i}_{a} -
\epsilon\,\,\undersym{h}{a}_{i} - \varepsilon\,\,\undersym{k}{a}_{i} + O(e^{2})\}\{\epsilon \,\,\undersym{h}{i}_{b, \mu}\} =
\epsilon (y^{b}\,\,\undersym{h}{a}_{b, \mu}) + O(e^{2}).\\
\end{eqnarray*}

\vspace{ - 1 cm}\end{proof}
\end{theorem}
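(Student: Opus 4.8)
The plan is to treat the Cartan-type hypothesis as supplying a single explicit input, the nonlinear-connection formula $N^{a}_{\mu} = y^{b}(\,\undersym{\lambda}{i}^{a}\partial_{\mu}\,\undersym{\lambda}{i}_{b})$, and to read off every remaining object from it by substituting the ansatz (\ref{linear}) together with the inverse-tetrad expansion of Theorem \ref{ATEF}(a). The organizing observation, which I would state once at the outset, is an order-counting rule: because $\,\undersym{h}{i}_{\mu}$ depends on $x$ alone and $\,\undersym{k}{i}_{\mu}$ on $y$ alone, the positional derivative $\partial_{\mu}$ annihilates the $\varepsilon$-terms while the directional derivative $\dot{\partial_{a}}$ annihilates the $\epsilon$-terms; consequently any product of a first-order quantity with a derivative of another first-order quantity is $O(e^{2})$ and is dropped. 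With this in hand, (a) is exactly the author's one-line computation: only the zeroth-order part $\,\undersym{\delta}{i}^{a}$ of $\,\undersym{\lambda}{i}^{a}$ survives against the factor $\epsilon\,\undersym{h}{i}_{b,\mu}$.

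For (b) the only step needing attention is that the horizontal coefficients of the canonical $d$-connection are formed with the adapted derivative $\delta_{\nu} = \partial_{\nu} - N^{a}_{\nu}\dot{\partial_{a}}$ rather than with $\partial_{\nu}$. I would show that $\delta_{\nu}$ may be replaced by $\partial_{\nu}$ to this order: the extra term $N^{a}_{\nu}\dot{\partial_{a}}\,\undersym{\lambda}{i}_{\mu}$ is the product of $N^{a}_{\nu} = O(\epsilon)$ (by (a)) and $\dot{\partial_{a}}\,\undersym{\lambda}{i}_{\mu} = \varepsilon\,\undersym{k}{i}_{\mu;a} = O(\varepsilon)$, hence $O(e^{2})$. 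Then $\Gamma^{\alpha}_{\mu\nu} = \,\undersym{\lambda}{i}^{\alpha}\delta_{\nu}\,\undersym{\lambda}{i}_{\mu} \backsimeq \epsilon\,\undersym{h}{\alpha}_{\mu,\nu}$ and likewise $\Gamma^{a}_{b\nu} \backsimeq \epsilon\,\undersym{h}{a}_{b,\nu}$, by the same substitution used for (a). Since the semicolon denotes the directional partial derivative $\dot{\partial_{a}}$, the contraction is immediate: $N_{\nu} = N^{a}_{\nu;a} = \dot{\partial_{a}}N^{a}_{\nu} \backsimeq \epsilon\,\dot{\partial_{a}}(y^{b}\,\undersym{h}{a}_{b,\nu}) = \epsilon\,\undersym{h}{a}_{a,\nu}$, using $\dot{\partial_{a}}y^{b} = \delta^{b}_{a}$.

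Parts (c), (d), (f) and (g) would then follow purely algebraically from (b) and (e), with no further estimates. The torsion $\Lambda^{\alpha}_{\mu\nu}$ is the skew part of $\Gamma^{\alpha}_{\mu\nu}$, giving (c) at once, and its contraction $C_{\mu} = \Lambda^{\alpha}_{\mu\alpha}$ gives (d). For (e) I would substitute the metric expansion of Theorem \ref{ATEF}(b) into $\,\overcirc{\Gamma}^{\alpha}_{\mu\nu} = \frac{1}{2}g^{\alpha\beta}(g_{\mu\beta,\nu} + g_{\nu\beta,\mu} - g_{\mu\nu,\beta})$: the $x$-derivatives kill the $\varepsilon w(y)$ part of $g_{\mu\nu}$, so $g_{\mu\beta,\nu} \backsimeq \epsilon z_{\mu\beta,\nu}$, and the inverse metric enters only through its leading value $\delta^{\alpha\beta}$ because its $O(e)$ correction multiplies a first-order derivative. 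This yields the stated formula. The contortion $\gamma^{\alpha}_{\mu\nu} = \Gamma^{\alpha}_{\mu\nu} - \,\overcirc{\Gamma}^{\alpha}_{\mu\nu}$ is then the difference of (b) and (e), giving (f); and $\Omega^{\alpha}_{\mu\nu} = \gamma^{\alpha}_{\mu\nu} + \gamma^{\alpha}_{\nu\mu}$ follows by symmetrizing (f) and collapsing the bracket with the symmetry $z_{\mu\nu} = z_{\nu\mu}$, giving (g).

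The main obstacle is not any hard estimate but the bookkeeping of which corrections are genuinely second order. The two places where this matters are the replacement $\delta_{\nu} \to \partial_{\nu}$ in (b) and the verification that the adapted derivative hidden in the Cartan-type $N^{a}_{\mu}$ carries no first-order contribution once the ansatz is inserted; both reduce to the single order-counting rule above, since $N^{a}_{\mu}$ is itself first order and every vertical derivative of a tetrad perturbation is of order $\varepsilon$. Once this is settled, each item of the theorem is a one-line substitution, and the clause \emph{the rest is similar} in the author's proof is fully justified.
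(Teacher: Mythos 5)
Your proposal is correct and follows essentially the same route as the paper: part (a) is the authors' own one-line substitution of the ansatz and the inverse-tetrad expansion into the Cartan-type formula for $N^{a}_{\mu}$, and your remaining items simply carry out the ``the rest is similar'' clause by the same substitute-and-drop-$O(e^{2})$ procedure. Your explicit order-counting rule (positional derivatives annihilate the $\varepsilon$-terms, directional derivatives the $\epsilon$-terms, so the $\delta_{\nu}\to\partial_{\nu}$ replacement costs only $O(e^{2})$) is a useful justification of a step the paper leaves implicit, but it is not a different method.
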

In the next two propositions, we assume that the canonical $d$-connection is of {\bf Cartan type}. Then, using the relevant definitions, taking into account
Theorem \ref{fc} and setting $Z_{\beta\mu\nu}: = \frac{1}{2}(z_{\beta\mu, \nu} + z_{\beta\nu, \mu} - z_{\nu\mu, \beta}),$ we obtain

\begin{proposition}\label{Summary} To the first order of $\epsilon$, we have

\begin{description}

\item [(a)] $\epsilon_{\mu\nu}\backsimeq \epsilon(\,\undersym{h}{\alpha}_{\mu, \alpha\nu} - \,\undersym{h}{\alpha}_{\nu, \alpha\mu})(x),$

\item [(b)] $\theta_{\mu\nu}\backsimeq \epsilon
(\,\undersym{h}{\alpha}_{\mu, \alpha\nu} + \,\undersym{h}{\alpha}_{\nu, \alpha\mu} - 2\,\undersym{h}{\alpha}_{\alpha, \mu\nu})(x),$

\item [(c)]$\xi_{\mu\nu} \backsimeq \epsilon\{\,\undersym{h}{\mu}_{\nu, \alpha\alpha} - Z_{\mu\nu\alpha, \alpha}\}(x)$,

\item [(d)] $\psi_{\mu\nu}\backsimeq \epsilon\{(\,\undersym{h}{\alpha}_{\mu,\nu\alpha} + \,\undersym{h}{\alpha}_{\nu, \mu\alpha}) -
2Z_{\alpha\mu\nu, \alpha}\}(x)$,

\item [(e)]  $\chi_{\mu\nu} \backsimeq \epsilon(\,\undersym{h}{\alpha}_{\mu, \nu\alpha} - \,\undersym{h}{\alpha}_{\nu, \mu\alpha})(x).$
\end{description}

\end{proposition}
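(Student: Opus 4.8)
The plan is to read each of the five tensors off its defining formula in Table~1 and then to exploit one structural fact: in the Cartan-type linearization every building block entering these definitions is already of order $\epsilon$. Indeed $C_\mu$, $\Lambda^\alpha_{\mu\nu}$, $\gamma^\alpha_{\mu\nu}$ and $\Omega^\alpha_{\mu\nu}$ are all $O(\epsilon)$ by Theorem~\ref{fc}(c), (d), (f), (g), while the connection coefficients $\Gamma^\alpha_{\mu\nu}$ are themselves $O(\epsilon)$ by Theorem~\ref{fc}(b) and the background metric reduces to $\delta$ at zeroth order by Theorem~\ref{ATEF}(b). Consequently, for any $O(\epsilon)$ tensor $X$ the horizontal covariant derivative collapses to a partial derivative to first order, $X_{\cdots|\nu}\backsimeq\partial_\nu X_{\cdots}$, since each correction term $\Gamma\cdot X$ is a product of two first-order quantities and is therefore $O(e^{2})$; for the same reason every index is raised or lowered with $\delta$. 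This single reduction is what turns each definition into an explicit second derivative of the perturbations $\undersym{h}{\alpha}_{\mu}$.

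First I would dispatch the two tensors built from $C_\mu$. Using Theorem~\ref{fc}(d) and $C_{\mu|\nu}\backsimeq\partial_\nu C_\mu$, one gets $C_{\mu|\nu}\backsimeq\epsilon(\undersym{h}{\alpha}_{\mu,\alpha\nu}-\undersym{h}{\alpha}_{\alpha,\mu\nu})$; forming the curl $C_{\mu|\nu}-C_{\nu|\mu}$ and the symmetrized sum $C_{\mu|\nu}+C_{\nu|\mu}$ then yields (a) and (b) at once, the trace terms $\undersym{h}{\alpha}_{\alpha,\mu\nu}$ cancelling in the skew combination and doubling to $-2\undersym{h}{\alpha}_{\alpha,\mu\nu}$ in the symmetric one, by commutativity of the mixed partials. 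The divergences $\chi_{\mu\nu}=\Lambda^\alpha_{\mu\nu|\alpha}$ and $\psi_{\mu\nu}=\Omega^\beta_{\mu\nu|\beta}$ are even quicker: substitute Theorem~\ref{fc}(c), (g), replace the covariant divergence by $\partial_\alpha$ (resp. $\partial_\beta$), and differentiate once more. For (d) the resulting $z$-terms reorganize into $2Z_{\alpha\mu\nu,\alpha}$ after invoking the symmetry $z_{\mu\nu}=z_{\nu\mu}$ together with the definition of $Z_{\beta\mu\nu}$.

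The one place demanding care, and the step I expect to be the main obstacle, is (c), because $\xi_{\mu\nu}=\gamma_{\mu\nu}{}^{\alpha}{}_{|\alpha}$ involves $\gamma$ with its \emph{third} slot raised, whereas Theorem~\ref{fc}(f) supplies $\gamma^{\alpha}_{\mu\nu}$ with its \emph{first} slot raised, and the fully covariant $\gamma$ is not symmetric in its indices. Here I would pass to the fully covariant components, relabel so as to place the free indices in the correct slots, raise the third index with $\delta$ to obtain $\gamma_{\mu\nu\alpha}\backsimeq\epsilon\{\undersym{h}{\mu}_{\nu,\alpha}-\frac{1}{2}(z_{\nu\mu,\alpha}+z_{\alpha\mu,\nu}-z_{\nu\alpha,\mu})\}$, and only then apply $\partial_\alpha$. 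The resulting combination $\frac{1}{2}(z_{\nu\mu,\alpha\alpha}+z_{\alpha\mu,\nu\alpha}-z_{\nu\alpha,\mu\alpha})$ must be matched against $Z_{\mu\nu\alpha,\alpha}$; the match follows from the symmetry of $z$, but the bookkeeping of slot order and the sign of the last term is exactly where an error would most easily arise. Once these identifications are verified, (c) follows and the proposition is complete.
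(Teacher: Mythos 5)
Your proposal is correct and follows essentially the same route as the paper, which offers no explicit proof beyond the instruction to substitute the first-order expressions of Theorem \ref{fc} into the Table~1 definitions, discard all $O(e^{2})$ products, and use the symmetry of $z_{\mu\nu}$ to recognize the combinations $Z_{\beta\mu\nu}$; your computations for (a)--(e), including the slot-ordering care in (c), check out against the stated formulas. The only point worth adding is that the reduction $X_{\cdots|\nu}\backsimeq\partial_{\nu}X_{\cdots}$ also uses $\delta_{\nu}=\partial_{\nu}-N^{a}_{\nu}\dot{\partial_{a}}\backsimeq\partial_{\nu}$ on first-order objects, which holds because $N^{a}_{\nu}$ is itself $O(\epsilon)$ by Theorem \ref{fc}(a), not only because the $\Gamma\cdot X$ corrections are second order.
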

The following tensors are of order $\epsilon^{2}$.
\begin{proposition}\label{uv} To the second order of $\epsilon$, we have
\begin{description}

\item[(a)] $\gamma_{\mu\nu}\backsimeq
\epsilon^{2}\{(\,\,\undersym{h}{\beta}_{\alpha, \beta} - \,\,\undersym{h}{\beta}_{\beta, \alpha})(\,\,\undersym{h}{\mu}_{\nu, \alpha} -
Z_{\mu\nu\alpha})\}(x)$

\item[(b)] $\phi_{\mu\nu}\backsimeq \epsilon^{2}\{
(\,\,\undersym{h}{\beta}_{\alpha, \beta} - \,\,\undersym{h}{\beta}_{\beta, \alpha})(\,\,\undersym{h}{\alpha}_{\mu, \nu} +
\,\,\undersym{h}{\alpha}_{\nu, \mu} - 2Z_{\alpha\mu\nu})\}(x).$

\item[(c)] $\eta_{\mu\nu}\backsimeq \epsilon^{2}\{
(\,\,\undersym{h}{\beta}_{\alpha, \beta} - \,\,\undersym{h}{\beta}_{\beta, \alpha})(\,\,\undersym{h}{\alpha}_{\mu, \nu} -
\,\,\undersym{h}{\alpha}_{\nu, \mu})\}(x).$

\item [(d)] $\omega_{\mu\nu}\backsimeq \epsilon^{2}\{
(\,\,\undersym{h}{\alpha}_{\mu, \beta} - Z_{\alpha\mu\beta})(\,\,\undersym{h}{\beta}_{\nu, \alpha} - Z_{\beta\nu\alpha})\}(x).$

\item [(e)] $\sigma_{\mu\nu}\backsimeq  \epsilon^{2}\{
(\,\,\undersym{h}{\alpha}_{\beta, \mu} - Z_{\alpha\beta\mu})(\,\,\undersym{h}{\beta}_{\alpha, \nu} - Z_{\beta\alpha\nu})\}(x).$

\item [(f)] $\alpha_{\mu\nu}\backsimeq  \epsilon^{2}\{(\,\,\undersym{h}{\beta}_{\mu, \beta} - \,\,\undersym{h}{\beta}_{\beta, \mu})
(\,\,\undersym{h}{\sigma}_{\nu, \sigma} - \,\,\undersym{h}{\sigma}_{\sigma, \nu})\}(x).$

\item [(g)] $h_{\mu\nu}\backsimeq \epsilon^{2}\{(\,\,\undersym{h}{\beta}_{\alpha, \mu} -
Z_{\beta\alpha\mu})(\,\,\undersym{h}{\alpha}_{\nu, \beta} - Z_{\alpha\nu\beta}) + (\,\,\undersym{h}{\beta}_{\mu, \alpha} -
Z_{\beta\mu\alpha})(\,\,\undersym{h}{\alpha}_{\beta, \nu} - Z_{\alpha\beta\nu})\}(x).$

\item [(h)] $U_{\mu\nu}\backsimeq \epsilon^{2}\{(\,\,\undersym{h}{\beta}_{\alpha, \mu} -
Z_{\beta\alpha\mu})(\,\,\undersym{h}{\alpha}_{\nu, \beta} - Z_{\alpha\nu\beta}) - (\,\,\undersym{h}{\beta}_{\mu, \alpha} -
Z_{\beta\mu\alpha})(\,\,\undersym{h}{\alpha}_{\beta, \nu} - Z_{\alpha\beta\nu})\}(x).$
\end{description}
\end{proposition}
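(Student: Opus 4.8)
The plan is to exploit the fact that every tensor listed in Proposition~\ref{uv} is a contraction of \emph{two} of the first-order building blocks $C_\mu$, $\gamma^{\alpha}_{\mu\nu}$, $\Lambda^{\alpha}_{\mu\nu}$, $\Omega^{\alpha}_{\mu\nu}$, each of whose leading $O(\epsilon)$ form is already recorded in Theorem~\ref{fc}. The conceptual point that drives the whole statement is that at zeroth order $\undersym{\lambda}{i}_{\mu}=\undersym{\delta}{i}_{\mu}$ describes a flat, constantly parallelized space, so the Weitzenb\"ock connection coincides with its Levi-Civita part and all the torsion/contorsion objects vanish; hence $C_\mu$, $\gamma^{\alpha}_{\mu\nu}$, $\Lambda^{\alpha}_{\mu\nu}$ and $\Omega^{\alpha}_{\mu\nu}$ each begin at $O(\epsilon)$ with no constant term. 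Consequently any product of two of them carries no first-order contribution and is purely $O(\epsilon^2)$, so it suffices to multiply the leading terms from Theorem~\ref{fc} and discard everything of order $\epsilon^3$.

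Before substituting I would record the covariant contorsion in an arbitrary index placement, $\gamma_{\rho\mu\nu}\backsimeq\epsilon(h_{\rho\mu,\nu}-Z_{\rho\mu\nu})$, reading it off Theorem~\ref{fc}(f) and using that, to the order retained, indices may be raised and lowered with the flat metric $\delta$: any correction to $\delta$ is itself $O(\epsilon)$ and, multiplying an already $O(\epsilon)$ factor, contributes only at $O(\epsilon^2)\times O(\epsilon)=O(\epsilon^3)$. With this in hand the $C$-contractions follow by pairing Theorem~\ref{fc}(d) for $C_\mu$ with the relevant connection object: $\gamma_{\mu\nu}=C_\alpha\gamma_{\mu\nu}{}^{\alpha}$ with (f), $\phi_{\mu\nu}=C_\beta\Omega^{\beta}_{\mu\nu}$ with (g), and $\eta_{\mu\nu}=C_\beta\Lambda^{\beta}_{\mu\nu}$ with (c), while $\alpha_{\mu\nu}=C_\mu C_\nu$ is just the square of (d). The purely contorsion-quadratic tensors $\sigma_{\mu\nu}$ and $\omega_{\mu\nu}$ arise by inserting (f) twice, taking care to place the contravariant index in the correct slot in each factor.

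For $h_{\mu\nu}$ and $U_{\mu\nu}$ no new computation is needed: since these are the symmetric and skew combinations $\gamma^{\beta}_{\alpha\mu}\gamma^{\alpha}_{\nu\beta}\pm\gamma^{\beta}_{\mu\alpha}\gamma^{\alpha}_{\beta\nu}$, I would simply add, respectively subtract, the two contorsion products already obtained for $\sigma_{\mu\nu}$ and $\omega_{\mu\nu}$. After each substitution the outcome must be brought to the displayed form by relabeling the summation indices and invoking the symmetry of $z_{\mu\nu}$ together with the symmetry of $Z_{\beta\mu\nu}$ in its last two slots (which is the Christoffel symmetry inherited from its definition).

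The algebra here is entirely mechanical, so I expect the main obstacle to be purely the index bookkeeping rather than any genuine difficulty. The contorsion appears with its contravariant index in different slots across $\gamma_{\mu\nu}$, $\sigma_{\mu\nu}$, $\omega_{\mu\nu}$, $U_{\mu\nu}$ and $h_{\mu\nu}$, so the safe route is to keep the fully covariant form $\gamma_{\rho\mu\nu}$ throughout and read off each needed component by matching slots, justifying every raise and lower by the flat-background argument above. Verifying that the correct slot is used for the contravariant index and that the flat-metric shortcut is legitimate at second order is where essentially all the effort lies; once that is pinned down, each of (a)--(h) is a one-line substitution of Theorem~\ref{fc}.
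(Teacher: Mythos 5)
Your overall strategy is the same as the paper's (which offers no separate proof beyond ``using the relevant definitions, taking into account Theorem \ref{fc}''): each tensor listed is bilinear in the first-order objects $C_{\mu}$, $\gamma^{\alpha}_{\mu\nu}$, $\Lambda^{\alpha}_{\mu\nu}$, $\Omega^{\alpha}_{\mu\nu}$, all of which vanish at zeroth order, so one substitutes the linearized expressions of Theorem \ref{fc} and retains only the $\epsilon^{2}$ term; your flat-metric argument for raising and lowering indices (corrections to $\delta$ feed in only at $O(\epsilon^{3})$) is also sound. On this basis items (a)--(f) go through exactly as you describe.

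The step that would fail is your treatment of (g) and (h). From Table 1, $h_{\mu\nu}$ and $U_{\mu\nu}$ are $\gamma^{\beta}_{\alpha\mu}\gamma^{\alpha}_{\nu\beta}\pm\gamma^{\beta}_{\mu\alpha}\gamma^{\alpha}_{\beta\nu}$, whereas $\sigma_{\mu\nu}=\gamma^{\beta}_{\alpha\mu}\gamma^{\alpha}_{\beta\nu}$ and $\omega_{\mu\nu}=\gamma^{\beta}_{\mu\alpha}\gamma^{\alpha}_{\nu\beta}$. Neither summand of $h_{\mu\nu}$ coincides with $\sigma_{\mu\nu}$ or $\omega_{\mu\nu}$: in each case the two lower indices of one factor are transposed, and nothing makes that transposition harmless --- the contorsion is not symmetric in its lower indices (its skew part is the torsion), and at first order $\,\undersym{h}{\beta}_{\alpha,\mu}$ is not symmetric in $\alpha,\mu$. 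So ``no new computation is needed, just add/subtract the products already obtained for $\sigma_{\mu\nu}$ and $\omega_{\mu\nu}$'' produces $\sigma_{\mu\nu}\pm\omega_{\mu\nu}$, which is not what (g) and (h) assert: after relabelling dummies, item (e) reads $(\,\undersym{h}{\beta}_{\alpha,\mu}-Z_{\beta\alpha\mu})(\,\undersym{h}{\alpha}_{\beta,\nu}-Z_{\alpha\beta\nu})$, which is the \emph{second} summand of (g), while the \emph{first} summand of (g), $(\,\undersym{h}{\beta}_{\alpha,\mu}-Z_{\beta\alpha\mu})(\,\undersym{h}{\alpha}_{\nu,\beta}-Z_{\alpha\nu\beta})$, matches neither (d) nor (e). The repair is simply to apply your own general prescription --- substitute the covariant contorsion slot by slot --- to the two products that actually occur in the definitions of $h_{\mu\nu}$ and $U_{\mu\nu}$; with that done, the rest of the proposal stands.
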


The relation equivalent to (\ref{EEEXz})  in the Cartan type case is given by
\begin{equation}\begin{split}\label{zxzz}\\[- 0.4 cm]0& = E_{\mu\nu} := g_{\mu\nu}H - 2H_{\mu\nu} -
2C_{\mu}C_{\nu} - 2g_{\mu\nu}(C^{\epsilon}\!\,_{|\epsilon} - C^{\epsilon}C_{\epsilon})- 2C^{\epsilon}\Lambda_{\mu\epsilon\nu}
+ 2C_{\nu|\mu}\\& \ \ \ \ \ \ \ \ \ \ \ \ \ - \ 2g^{\epsilon\alpha}\Lambda_{\mu\nu\alpha|\epsilon}
 - 2N_{\epsilon}(\Lambda^{\epsilon}\,_{\nu\mu} -  \Lambda_{\mu\nu}\,^{\epsilon})
+ 2g_{\mu\nu}C^{\epsilon}N_{\epsilon} - 2C_{\mu}N_{\nu}.\end{split}\end{equation}

Consequently, $C_{\mu}N_{\nu}$, $C^{\mu}N_{\mu}$, $N^{\beta}\Lambda_{\mu\nu\beta}$ and $N^{\beta}\Lambda_{\beta\mu\nu}$ are the additional
terms appearing in the horizontal field equations under the Cartan condition that have no counterparts in the field
equations obtained in the context of the GFT.

\begin{proposition}\label{extra} The following holds:
\begin{description}

\item [(a)] $C_{\mu}N_{\nu} = \,\epsilon^{2}\{(\,\undersym{h}{\alpha}_{\mu, \alpha} -
\undersym{h}{\alpha}_{\alpha, \mu})\,\undersym{h}{c}_{c, \nu}\}(x).$
\item [(b)] $N^{\beta}\Lambda_{\mu\nu\beta} = \epsilon^{2}\{\,\undersym{h}{c}_{c, \beta}(\,\undersym
{h}{\mu}_{\nu, \beta} - (\,\undersym{h}{\mu}_{\beta, \nu})\}(x).$

\item [(c)] $N^{\beta}\Lambda_{\beta\mu\nu} =   \,\epsilon^{2}\{\,\undersym{h}{c}_{c, \beta}(\,\undersym{h}{\beta}_{\mu, \nu} -
\,\undersym{h}{\beta}_{\nu, \mu})\}(x).$

\end{description}
\end{proposition}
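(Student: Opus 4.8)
The plan is to prove all three identities by direct substitution of the first-order expressions already furnished by Theorem \ref{fc}, since every factor entering the three products is itself of order $\epsilon$. First I would record the three ingredients: part (d) of Theorem \ref{fc} gives $C_{\mu} \backsimeq \epsilon(\,\undersym{h}{\alpha}_{\mu, \alpha} - \,\undersym{h}{\alpha}_{\alpha, \mu})(x)$, part (b) gives $N_{\nu} \backsimeq \epsilon\,\undersym{h}{c}_{c, \nu}(x)$, and part (c) gives $\Lambda^{\alpha}_{\mu\nu} \backsimeq \epsilon(\,\undersym{h}{\alpha}_{\mu, \nu} - \,\undersym{h}{\alpha}_{\nu, \mu})(x)$, all of which are functions of $x$ alone.

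The key structural observation, which I would state once and then reuse, is that each of $C_{\mu}$, $N_{\nu}$ and $\Lambda^{\alpha}_{\mu\nu}$ vanishes at zeroth order and begins at $O(\epsilon)$. Hence any product of two of them is $O(\epsilon^{2})$, and in forming such a product one may freely replace the metric used to raise or lower indices by its flat value $\delta$: by Theorem \ref{ATEF}(b) the corrections to $g_{\mu\nu}$ and $g^{\mu\nu}$ are themselves $O(\epsilon)$, so they would only enter at $O(\epsilon^{3})$ and are discarded under the weak-field convention $O(e^{2})=0$. In particular $N^{\beta} \backsimeq N_{\beta}$, and with the convention $\Lambda_{abc} = g_{a\alpha}\Lambda^{\alpha}_{bc}$ one has $\Lambda_{\mu\nu\beta} \backsimeq \Lambda^{\mu}_{\nu\beta}$ and $\Lambda_{\beta\mu\nu} \backsimeq \Lambda^{\beta}_{\mu\nu}$ to the order retained.

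With this in hand the three computations are immediate. For (a) I multiply the expression for $C_{\mu}$ by that for $N_{\nu}$ directly, obtaining $\epsilon^{2}(\,\undersym{h}{\alpha}_{\mu, \alpha} - \,\undersym{h}{\alpha}_{\alpha, \mu})\,\undersym{h}{c}_{c, \nu}$. For (b) and (c) I first lower the relevant index of $\Lambda^{\alpha}_{\mu\nu}$ using $g \backsimeq \delta$ and then contract against $N^{\beta} \backsimeq \epsilon\,\undersym{h}{c}_{c, \beta}$. The only point requiring attention is which of the three slots of $\Lambda_{\cdot\cdot\cdot}$ carries the summed index $\beta$: in (b) the index $\beta$ occupies the last slot while the free index $\mu$ sits in the lowered first slot, producing the antisymmetrized pair $\,\undersym{h}{\mu}_{\nu, \beta} - \,\undersym{h}{\mu}_{\beta, \nu}$; in (c) the index $\beta$ is precisely the lowered first slot, producing instead $\,\undersym{h}{\beta}_{\mu, \nu} - \,\undersym{h}{\beta}_{\nu, \mu}$.

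I expect this index-bookkeeping to be the only genuine source of slips; there is no analytic obstacle, since each identity is simply the product of two first-order tensors with all higher-order corrections suppressed. Accordingly I would present the argument for (a) in full, remark that (b) and (c) differ from one another only in the placement of the contracted index, and leave the remaining substitutions to the reader as in the proofs of Theorems \ref{ATEF} and \ref{fc}.
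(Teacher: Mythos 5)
Your proposal is correct and follows exactly the route the paper intends: Proposition \ref{extra} is stated without an explicit proof, being presented as an immediate consequence of substituting the first-order expressions of Theorem \ref{fc} (parts (b), (c), (d)) and noting that raising or lowering indices with $g \backsimeq \delta + O(\epsilon)$ cannot affect a product that already begins at $O(\epsilon^{2})$. Your index bookkeeping in (a)--(c) reproduces the stated formulas, so nothing further is needed.
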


In view of Propositions (\ref{Summary}), (\ref{uv}) and (\ref{extra}), we have
\begin{corollary}\label{same} Assume that (\ref{linear}) holds. Then, up to the first order of approximation,
the purely horizontal fundamental geometric objects of the
EAP-space in the Cartan type case
are {identical} to their corresponding counterparts in the context of classical AP-space \cite{aa} and are
functions of the positional argument $x$ only.
This is because all extra terms appearing in the horizontal field equations in ETUFT are of order $\epsilon^{2}$.
Consequently, the horizontal field equations
under the Cartan condition {coincides} with the GFT up to the {first} order of approximation.
\end{corollary}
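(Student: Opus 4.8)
The plan is to read the result off from Theorem \ref{fc} together with Propositions \ref{Summary}, \ref{uv} and \ref{extra}, assembling them against the explicit Cartan-type field equation (\ref{zxzz}). The whole content reduces to a bookkeeping in powers of $\epsilon$ plus a check of which arguments survive to first order.

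First I would isolate, in (\ref{zxzz}), exactly those terms that have no analogue in the GFT field equation of \cite{aa}. As noted immediately after (\ref{zxzz}), these are precisely the terms built from the nonlinear connection divergence $N_\epsilon = N^{a}_{\epsilon; a}$, namely $-2N_\epsilon(\Lambda^\epsilon{}_{\nu\mu} - \Lambda_{\mu\nu}{}^\epsilon)$, $+2g_{\mu\nu}C^\epsilon N_\epsilon$ and $-2C_\mu N_\nu$. By Theorem \ref{fc}(b) we have $N_\nu \backsimeq \epsilon\,\undersym{h}{a}_{a,\nu}(x)$, already of order $\epsilon$; combined with Theorem \ref{fc}(c),(d), which give $\Lambda^\alpha_{\mu\nu}$ and $C_\mu$ as quantities of order $\epsilon$, every such product is of order $\epsilon^2$. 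This is exactly the content of Proposition \ref{extra}. Under the linearization assumption (\ref{linear}) all $O(e^2)$ terms are discarded, so the extra terms drop out and (\ref{zxzz}) reduces, to first order, to the field equation of the GFT.

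Next I would settle the two auxiliary claims. For coincidence with the classical AP-space, I would compare the first-order expressions of Theorem \ref{fc}(c)--(g) and Proposition \ref{Summary} for $\Lambda^\alpha_{\mu\nu}$, $C_\mu$, $\gamma^\alpha_{\mu\nu}$, $\Omega^\alpha_{\mu\nu}$, $\epsilon_{\mu\nu}$, $\theta_{\mu\nu}$, $\xi_{\mu\nu}$, $\psi_{\mu\nu}$, $\chi_{\mu\nu}$ with the corresponding linearized quantities of \cite{aa}: each has the identical functional form, with $\undersym{h}{\alpha}_\mu(x)$ playing the role of the classical perturbation. For the $x$-dependence, I would observe that in every one of these expressions the directional perturbations $\undersym{k}{i}_\mu(y)$ have dropped out, since they enter only through horizontal partial derivatives $\partial_\mu$, which annihilate functions of $y$ alone, while the lone surviving $y^b$ in $N^a_\mu$ (Theorem \ref{fc}(a)) is contracted away by the vertical divergence defining $N_\nu$, leaving $\epsilon\,\undersym{h}{a}_{a,\nu}(x)$. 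Thus to first order every purely horizontal object is a function of $x$ only. Proposition \ref{uv} serves as a consistency check that the quadratic tensors $\sigma_{\mu\nu}$, $h_{\mu\nu}$, $\omega_{\mu\nu}$, $U_{\mu\nu}$, $\gamma_{\mu\nu}$, $\phi_{\mu\nu}$, $\eta_{\mu\nu}$, $\alpha_{\mu\nu}$ are themselves $O(\epsilon^2)$ and so cannot disturb the first-order identification.

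The only genuine obstacle is organizational: I must make sure the enumeration of the GFT-absent terms in (\ref{zxzz}) is complete and that no first-order contribution has been overlooked in the contraction producing $N_\nu$. Once Proposition \ref{extra} certifies that each such term carries a factor $\epsilon^2$, the conclusion is immediate, for discarding the $O(e^2)$ terms leaves exactly the GFT horizontal field equation written through the classical, purely $x$-dependent objects of Proposition \ref{Summary}.
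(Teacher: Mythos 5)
Your proposal is correct and follows essentially the same route as the paper, which deduces the corollary directly from Theorem \ref{fc} and Propositions \ref{Summary}, \ref{uv} and \ref{extra} by observing that the extra Cartan-type terms $C_{\mu}N_{\nu}$, $C^{\mu}N_{\mu}$, $N^{\beta}\Lambda_{\mu\nu\beta}$, $N^{\beta}\Lambda_{\beta\mu\nu}$ are all $O(\epsilon^{2})$ and that the surviving first-order objects are the classical, purely $x$-dependent AP expressions. Your added remark on why the $y$-dependence drops out (horizontal derivatives annihilating the $\undersym{k}{i}_{\mu}(y)$ and the contraction in $N_{\nu}$) is a sound elaboration of what the paper leaves implicit.
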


One reading of Corollary \ref{same} is that our constructed unified field theory seems to be a
plausable generalization of the GFT. Not only do the
horizontal field equations in the Berwald-type case coincide with those of the GFT \cite{WNA},
but also {\it the horizontal field equations in the {Cartan-type}
case {coincide} with the GFT in the first order of approximation}.
This also means that our theory (under the Cartan type condition) {\it differs} from the
GFT {\it only} when dealing with {\it strong} fields, that is, in the second (and higher orders) of $\epsilon$.
In other words, the {\it two theories actually coincide when dealing with weak fields}.

\end{mysect}

\begin{mysect}{Approximate solutions of the vertical field equations}

In this final section, we will examine the solutions of the vertical field equations corresponding to the
first order of approximation.
\bigskip

By (\ref{SYMx}) and (\ref{xems}), the vertical Einstein tensor is expressed in terms of the fundamental tensors in the form:

\begin{equation}\label{SYM}\ \overcirc{S}_{ab} - \frac{1}{2} \, h_{ab} \ \overcirc{\cal S} =
\frac{1}{2}\,g_{ab}(\bar{\sigma} - \bar{h})  - (\sigma_{ab} - h_{ab}).\end{equation}

As easily checked, in view of Theorem \ref{ATEF} {\bf (c)} and Proposition \ref{uvx} {\bf (e)} and {\bf (g)},
(\ref{SYM}) reduces in the first
order of approximation to
\begin{equation}\label{fo} (\,\overcirc{S}_{ab})^{(1)} - \frac{1}{2}\,\delta_{ab}\,\,\overcirc{\cal S}^{(1)} = 0.\\[- 0.2 cm]\end{equation}
Contracting, we get $\,\overcirc{\cal S}^{(1)} = 0$, so that (\ref{fo}) reduces to
\begin{equation}\label{zerox}(\,\overcirc{S}_{ab})^{(1)} = 0.\end{equation}
\vspace{- 0.1 cm}Now, we have \cite{WNA} 
\begin{equation}\label{Sx}\overcirc{S}_{ab} = - \frac{1}{2}\,(\theta_{ab} - \psi_{ab} + \phi_{ab}) +
\omega_{ab}.\end{equation}
Consequently, in view of Proposition \ref{uvx} ($(\phi_{ab})^{(1)} = (\omega_{ab})^{(1)} = 0$), (\ref{Sx}) reduces in the first order of approximation to
\begin{equation}(\,\overcirc{S}_{ab})^{(1)} = \frac{1}{2}\{(\psi_{ab})^{(1)} - (\theta_{ab})^{(1)}\},\end{equation}
so that, by Proposition \ref{Sumx} {\bf (b)} and {\bf (d)}, noting that $w_{dd} = 2\,\undersym{k}{d}_{d}$, we obtain
\begin{eqnarray*}(\,\overcirc{S}_{ab})^{(1)} &=& \frac{1}{2}\,\{(\,\undersym{k}{d}_{a; bd} + \,\undersym{k}{d}_{b; ad})
- w_{da; bd} - w_{db; ad} + w_{ab; dd}\\
&& - \ (\ \,\undersym{k}{d}_{a; db} +
\,\undersym{k}{d}_{b; ad}) + \ 2\,\,\undersym{k}{d}_{d; ab}\}\\&=&\frac{1}{2}\,\{w_{ab; dd} -
w_{ad; bd} - w_{bd; ad} + w_{dd; ab}\}.\\
\\[- 1cm]\end{eqnarray*}
The above equation, together with (\ref{zerox}), gives
\begin{equation}\label{bp}w_{ab; dd} - w_{ad; bd} - w_{bd; ad} - w_{dd; ab} = 0.\end{equation}
\\[- 0.5 cm] Consequently,
\begin{eqnarray*}w_{ab; dd} &=& w_{ad; bd} + w_{bd; ad} -
w_{dd; ab}\\&=&(w_{ad; bd} - \frac{1}{2}\, w_{dd; ab}) + (w_{bd; ad} - \frac{1}{2} \,
w_{dd; ab}),\\\end{eqnarray*}
\\[ - 1 cm] which can be expressed in the form
\begin{equation} w_{ab; dd} = \big\{(w_{ad; d} - \frac{1}{2}\, w_{dd; a})_{; b} +
(w_{bd; d} - \frac{1}{2} \, w_{dd; b})_{; a}\big\}.\end{equation}
Hence, if $(\,\, \overcirc{C}^{a}_{dd})^{(1)} = 0$, that is, $w_{ad; d} = \frac{1}{2}\, w_{dd; a}$,
then the symmetric part of the vertical field equations in this case reduces to
\begin{equation}\label{wave}w_{ab; dd} = \big\{\frac{\partial^{2} }{\partial (y^{1})^{2}} + \frac{\partial^{2} }{\partial (y^{2})^{2}} +
\frac{\partial^{2} }{\partial (y^{3})^{2}} + \frac{\partial^{2} }{\partial (y^{4})^{2}}\big\}\, w_{ab} = 0.\end{equation}
In view of the interpretation of the variable $y^{a}$, (\ref{wave}) represents,  {\it \`a la Kaluza}, a {\it wave equation
in the  micro-internal dimension}.

\bigskip

On the other hand, by (\ref{MAXTx}) and (\ref{po}), the {vertical} generalized {electromagnetic field} is given by
\begin{equation}\label{boz} F_{ab} :=  (\gamma_{ab} - \xi_{ab} + \eta_{ab}),\end{equation}
\vspace{- 0.6 cm}\begin{equation}\label{poz}F_{ab} = \dot{\partial_b}C_{a} - \dot{\partial_a}C_{b}.\end{equation}

\noindent By Proposition \ref{uvx}, (\ref{boz}) and (\ref{poz}) reduce in the first approximation to
\begin{equation}\label{SSX}
(C_{a; b})^{(1)} - (C_{b; a})^{(1)} = - (\xi_{ba})^{(1)}\end{equation}
Now,  in the first order of $\varepsilon$,
\begin{equation}C_{a} \backsimeq \varepsilon(\,\undersym{k}{d}_{a; d} - \,\undersym{k}{d}_{d; a}) \backsimeq \varepsilon (C_{a})^{(1)}; \ \ \ \
C_{a; b} \backsimeq \varepsilon(\,\undersym{k}{d}_{a; db} - \,\undersym{k}{d}_{d; ab}) \backsimeq\varepsilon (C_{a; b})^{(1)}.\end{equation}
Consequently,
\begin{equation}\label{vq}(C_{a; b})^{(1)} = (C_{a})^{(1)}\!\,_{;b} = \,\undersym{k}{d}_{a; db} - \,\undersym{k}{d}_{d; ab}.\end{equation}
Contraction of relation (\ref{bp}) yields
\begin{equation}w_{dd; aa} = w_{ad; ad},\end{equation}
equivalently,
\begin{equation}\label{rr}\,\undersym{k}{d}_{d; aa} = \,\,\undersym{k}{d}_{a; da}.\end{equation}

\noindent Hence, in view of (\ref{vq}) and (\ref{rr}), we conclude that
\begin{equation}\label{pp}(C_{d; d})^{(1)} = (C_{d})^{(1)}\!\,_{; d} = 0\end{equation}

\noindent Differentiating (\ref{SSX}), taking into account (\ref{pp}), 
we obtain
\begin{equation}(C_{a; dd})^{(1)} = (C_{a})^{(1)}\!\,_{; dd} = - (\xi_{da})^{(1)}\!\,_{; d} = - (\xi_{da; d})^{(1)}\end{equation}
Consequently,  by $J^{a} := F^{ab}\!\,_{{o\atop{|}}b}$, where $J_{a}$ is the vertical current density, we obtain

\begin{equation}\big\{\frac{\partial^{2} }{\partial (y^{1})^{2}} + \frac{\partial^{2} }{\partial (y^{2})^{2}} +
\frac{\partial^{2} }{\partial (y^{3})^{2}} + \frac{\partial^{2} }{\partial (y^{4})^{2}}\big\}\, (C_{a})^{(1)} = - (J_{a})^{(1)}.\end{equation}

\noindent Hence, if $(J_{a})^{({1})} = 0$, then $(C_{a})^{(1)}$ again satisfies a {\it wave equation in the micro-internal dimension}.

\bigskip

The vertical part of the field equations gives rise to two different $4$-dimensional
Laplace equations which would be wave equations if the metric was not {positive definite}, but had {Lorentz signature}. The meaning of
this result, however, will be clarified if a clear {\it physical interpretation of the vector $y^{a}$ is given}.

\end{mysect}


\bigskip


We end the paper with the following remarks and comments:

\begin{itemize}


\item As previously mentioned, a possible interpretation of the coordinates would to take $x_1$, $x_2$ and $x_3$ as {space}
coordinates and $x^{4}$ as {time} coordinates. On the other hand, the vector $y^{a}$ is attached as an {internal} variable to each $x^{\mu}$.
According to Miron\, \cite{V}, $y^{a}$ may be regarded as spacetime fluctuations (micro internal freedom) associated to the point
$x^{\mu}$. It could be argued that this interpretation, however, seems somewhat incompatible with (\ref{linear}). This is because we are
actually combining dimensions of {different} natures, namely, a {\bf macro} dimension $x$ and a {\bf micro} dimension $y$. To avoid this
(apparent) incompatibility, we shall take $x^{\mu}$ as stated above and {\it leave the meaning of the directional argument $y^{a}$ unspecified, at least for the present.}

\item In any geometric field theory, authors try to attribute physical meaning to the geometric objects
present in the theory. One way to accomplish
this is to compare the new theory with previously existing field theories. The GFT is a
field theory unifying electromagnetism and gravitation. It
is thus a theory that generalizes both Maxwell's electromagnetic theory and Einstein's general theory of relativity.
The comparison of the GFT with these two theories,
resulted in attributing some physical meaning to certain geometric objects occuring in the GFT (using
certain systems of units). A new scheme,
called {\bf Type Analysis} \cite{W}, has also been suggested in the context of the GFT.
Its aim is to test the AP-geometry (the underlying geometry of the GFT) for representing certain physical fields.
The procedure of Type Analysis is usually applied before solving the field equations.
It is a {covariant} procedure formulated in the language of tensors. Some tensors are used to characterize
the AP-space under consideration. Roughly, the types of spaces considered are as follows:
spaces with or without electromagnetic field, spaces with or without gravitational fields, and, finally,
spaces in which both fields are present. The strengths of the fields involved are also taken into account.
More precisely, some tensors are defined indicating the presence or absence of
electomagnetic and gravitational fields. Not all possible combinations, however, are allowed.
There are certain constraints. For example, a space
with a non-vanishing electromagnetic field necessarily contains a non-vanishing gravitational
field.\footnote{This is easy to see. An electromagnetic field carries energy, that is
mass (since energy and mass are equivalent). But mass is the source of gravitational field.}

\item One of our future
aims is to generalize the procedure of Type Analysis to our constructed field
theory. It should be noted that this is not an easy task.
This is because both the physical and the mathematical scope of the ETUFT are
much wider and richer in content than the GFT.
Therefore, in the context of the ETUFT, an \lq\lq extension\rq\rq\, of the procedure of Type Analysis
(applied to the (horizontal) field equations in the Cartan type case and the vertical field equations
in the general case) is expected to be more complicated and much more involved.
Some of the reasons for such complications are the following:
First, the (horizontal) field equations in the Cartan type are more complex than their corresponding
counterparts in the GFT (they coincide only in the first order of approximation).
Secondly, we have two types of \lq\lq e\rq\rq \,
in our linearized field theory. Last, but not least, our geometric objects, unlike the classical AP-geometry
(in which the GFT is formulated), depend,
in general, on both the directional argument $x$ and the positional argument $y$.
\item On the other hand, an application of (some method similar to) the procedure of Type analysis
to the ETUFT may
help us illuminate the {physical role} of the nonlinear connection and give a plausible physical
interpretation to the directional argument $y$. This, in turn, may shed more light on the
possibility that the ETUFT - in addition to
unifying gravity and electromagnetism - could also describe some micro physical phenomena.

\newpage

\item In conclusion, we note that both Einstein's and Maxwell's equations in the ETUFT are {\bf doubled}.
This is due to the splitting of the field equations induced by the nonlinear connection.
An {\bf interpretation} of the new equations is needed. Perhaps it is possible to connect one of
Maxwell's fields with $SU(2)$ {gauge fields}. Such a step seems necessary, given the
spectacular success of Weinberg-Salam theory. This point will be the subject of future research.

\end{itemize}


\bibliographystyle{plain}

\end{document}